\DeclareSymbolFontAlphabet{\Bbb}{AMSb}
\newtheorem{theorem}{Theorem}[section]
\newtheorem{proposition}[theorem]{Proposition}
\newtheorem{lemma}[theorem]{Lemma}
\newtheorem{corollary}[theorem]{Corollary}
\newtheorem{definition}[theorem]{Definition}
\newtheorem{example}[theorem]{Example}
\newtheorem{remark}[theorem]{Remark}
\newcommand{\PR}{\Bbb{P}}
\newcommand{\R}{\Bbb{R}}    
\newcommand{\E}{\Bbb{E}}    
\newcommand{\F}{\Bbb{F}}
\newcommand{\rem}[1]{}
\def \a         { \alpha }
\def \e         { \varepsilon }
\def \r         { \rho }
\def \t         { \tau }
\def \d         { \delta }
\def \x         { \xi }
\newcommand{\Defi}{\ := \ }
\newlength{\fixboxwidth}
\newcommand{\BlackBox}{\rule{1.5ex}{1.5ex}}  
{\end{list}}
\newenvironment{proof}{\begin{list}{{\bf \em Proof: }}%
{\setlength{\labelsep}{0pt}\setlength{\leftmargin}{0pt}\setlength{\labelwidth}{0pt}}\item}%
{\hfill\BlackBox\end{list}}
{\hfill\BlackBox\end{list}}
\newenvironment{proofof}[1]{\begin{list}{{\bf \em Proof of #1: }}%
{\setlength{\labelsep}{0pt}\setlength{\leftmargin}{0pt}\setlength{\labelwidth}{0pt}}\item}%
{\hfill\BlackBox\end{list}}
{\hfill\BlackBox\end{list}}
\newcommand{\snorm}[1] {\Vert #1 \Vert}
\newcommand{\sLp}[1]{\mbox{${\cal L}_p(\mu)$ }}
\begin{document}

\author{  Clint Scovel\thanks{Current contact information: California Institute of Technology,
clintscovel@gmail.com.}\\
Los Alamos National Laboratory\\
{\it jcs@lanl.gov}
\and
 Ingo Steinwart\\
 Universit\"{a}t Stuttgart\\
{\it Ingo.Steinwart@mathematik.uni-stuttgart.de }
\\[5mm]
}
\title{Hypothesis Testing for Validation and Certification
}

\maketitle


\begin{abstract}

We develop a hypothesis testing framework for the formulation 
 of the problems of 1) the {\em validation} of a simulation model and
2) using modeling to  {\em certify}  the performance of a physical system\footnote{This document is essentially an exact copy of one dated April 12, 2010.}.
These results are used to solve the extrapolative validation and certification problems, namely problems where the regime of interest is different than the regime for which we have experimental data.
 We use concentration of measure theory to develop the tests
and analyze their errors. This work was stimulated by the work of
 Lucas, Owhadi,
and Ortiz
\cite{LuOwOr08a} where  a rigorous method of 
 validation and certification  is described and
tested.
In Remark \ref{rem_ortiz} we describe the connection
between the two approaches. Moreover, as mentioned  in that work these
results have important implications in the Quantification of Margins and
Uncertainties (QMU) framework. 
In particular, in Remark \ref{rem_qmu} we describe how
it provides a rigorous interpretation of the notion of {\em confidence}
and  new notions of {\em margins} and
{\em uncertainties} which allow this interpretation.
Since certain {\em concentration parameters} used in the above tests may be
unkown, we furthermore show, in the last half of the paper, how to derive equally powerful tests which estimate them from sample data, thus
replacing the assumption of the values of the concentration parameters
with weaker assumptions. 
\end{abstract}

\section{Introduction}

Validation of simulation models is clearly important and much substantial work has
been directed towards it, see e.g.~\cite{ObTr02,ObTr04,Eardley,Balci,Sargent96,
Sargent04,Sargent05a} and the references therein.
Moreover, the problem appears to
go straight to the heart of the philosophy of science (see e.g.~\cite{NaFi,KeGa,KlOnGa}).
Indeed,  \cite{OrShBe} assert that validation is impossible, and \cite{LuOwOr08a} describe
a rigorous
 method for it. 
 On the other hand, it appears
that while all agree that validation is an important and difficult problem, few agree on
what the problem actually is. In the words of G. K. Chesterton  \cite[pg.~ix]{BarPre86},
"It isn't that they can't see
the solution. It is that they can't see the problem." 
  In this paper we
formulate examples of both 
the problems of validation and certification as problems of
constructing hypothesis
tests. A straightforward analysis using concentration of measure theory then provides tests and guarantees on their performance.

  Although hypothesis tests have been used in validation before, e.g.~in
\cite{BaSa,Kleijnen}, our formulation is quite different.  In particular, we formulate
null and alternate hypotheses which represent a flexibility in the customer's
specification of a performance design threshold. We develop tests  
that require a clear delineation of  assumptions
 and  then use concentration of measure inequalities to analyze the performance of
the tests. These results are then used to solve the extrapolative validation and
certification problems, namely problems where the deployment regime is different
than the experimental regime.
This framework is then compared with that of Lucas, Owhadi and Ortiz \cite{LuOwOr08a}.  As mentioned in that work,
these results also have important implications in the Quantification of Margins and
Uncertainties (QMU) framework discussed in detail in \cite{QMU09,PiTrHe06,ShWo03}. In
particular, in Remark \ref{rem_qmu} we 
discuss how these results provide  a rigorous interpretation of the notion of {\em confidence} and a new notion of 
{\em uncertainties} which allow this interpretation.
Since certain {\em concentration parameters} used in the above tests may be
unknown, we furthermore show how to derive equally powerful tests which estimate them from
sample data, thus
replacing the assumption of the values of the concentration parameters
with much weaker assumptions.
This humble
beginning needs to be refined so that it fits better with real applications. It should also
incorporate some of the conclusions and structure of the above-mentioned works, but we leave that for the future.
 Let us now describe our framework and formulations.
At a high level we say that {\em validation} is the assessment of the quality of a model of a
physical system and {\em certification} is using modeling to assess the performance of a physical
system. To make these notions more specific, consider the following general framework which
will be used for both validation and certification.

Consider the case of a real-valued random variable $U$ that
describes the performance of a system and  a customer who would like to have a
quantitative guarantee on this performance. You inform the customer that
you can consider a test of the hypothesis 
\[ \PR\bigl(U \geq a\bigr) \geq p\]
where $a$ is the performance design threshold and $p$ is a level of
confidence. When pressed to provide the specific values of the
parameters $a$ and $p$ the customer may provide values, for example
$a=1000$ and $p=.95$. However, if you then ask him whether $a=950$ and
$p=.93$ would be acceptable, he might respond in the affirmative.
Consequently, a more realistic test might be to test
\begin{equation}
\label{htuncertain}
 \PR\bigl(U \geq A\bigr) \geq P
\end{equation}
where $A$ and $P$ are sets instead of real numbers. However, what
(\ref{htuncertain}) actually means and how to construct and analyze a
test for it are not clear. To resolve this problem, let 
 us introduce some notation.
Let ${\cal U}$ denote the set of real-valued
 random variables. 
For $a \in \R,\, p \in (0,1)$ define the null hypothesis by
 \[{\cal H}_{a,p} \Defi
\bigl \{U \in {\cal U}: \PR(U \geq a) \geq p \bigr \}\]
and the alternative by
\[
 {\cal K}_{a, p} \Defi \bigl \{U \in {\cal U}: \PR(U \geq a) <
p \bigr \}.\]
Consider $a' \leq a$,  $p' \leq p$ and suppose that
 $U \in  {\cal H}_{a,p} \cap {\cal
K}_{a',p'}$. Then, since
\[p' > \PR(U \geq a')\geq  \PR(U \geq a) \geq
p \]  is a contradiction,
 we conclude that
\begin{equation}
\label{basic} {\cal H}_{a,p} \cap {\cal
K}_{a',p'}=\emptyset ,\quad a' \leq a,\, p'
\leq p \, .
\end{equation}
Therefore, when $a' \leq a$ and $ p' \leq p$ we can consider
a test of   ${\cal H}_{a,p}$ against
${\cal K}_{a',p'}.$
Now let $a$ and $p$  be specified and specify  tolerance intervals
 $A$ and $P$ such 
that $A \leq a$ and $P \leq p$, where the notation implies that $a \in
A$ and $p \in P$. Then by (\ref{basic}) we can define
a test of  (\ref{htuncertain}) by testing ${\cal H}_{a,p}$ against
${\cal K}_{a',p'}$ for some $a' \in A$ and $p' \in P$. 	Given
the  freedom the tolerance intervals  allow in the choice of $a'$ and
$p'$, we seek to choose them to our advantage.

Let us first consider the case where $A=\{a\}$, namely there is no
tolerance to changing the design criterion. We wish to construct a test
of  ${\cal H}_{a,p}$ against
${\cal K}_{a,p'}$ for $p' \in P$. Let $U_{i}, i=1,..,n$ be i.i.d.~samples
from $U$. We can form a test by composing the sample data
$U_{i}, i=1,..,n$ with the indicator function  $I_{a}: \R
\rightarrow  \{0,1\}$
defined by $I_{a}(u)=1, u \geq a$ and $I_{a}(u)=0, u  < a$ to obtain
Bernoulli random variables $I_{a}\circ U_{i}$. That is, we 
simply evaluate whether the sample points are
 greater than or equal to $a$ or not.
We form a test of  ${\cal H}_{a,p}$ against
${\cal K}_{a',p'}$
by forming the binomial test of ${\cal H}_{p}$ against
${\cal K}_{p'}$ where
$${\cal H}_{p}\Defi \{X: \PR(X=1)=r, \, \PR(X=0)=1-r,\, r \geq p\}$$
and
$${\cal K}_{p'}\Defi \{X: \PR(X=1)=r,\,  \PR(X=0)=1-r,\, r <p'\}.$$
By the Neyman-Pearson Lemma  \cite[Thm.~3.1]{Lehmann94} and
\cite[Thm.~3.2]{Lehmann94} we know there exists  a uniformly most
powerful test of ${\cal H}_{p}$ against
${\cal K}_{p'}$  (see e.~g.~ \cite[Ch.~3]{Lehmann94}).
However, this uniformly most powerful test is
characterized through the
binomial distribution. The statement of approximate tests with rigorous
guarantees on their type I and II errors appears, in principle,
 to be available but
evidently it is no easy matter. Rigorous bounds connecting the binomial
distribution to the normal can be found in  Feller  \cite{Feller45}
and to the Poisson distribution in Anderson and Samuels \cite{AnSa65}. Guarantees
outside of the range of applicability of these results can be found in
 Slud \cite{Slud77}. Approximations to the optimal test parameters have been
derived and studied empirically in Shore \cite{Shore86a,Shore86b} and
 Chernoff
\cite{Chernoff52} has analyzed the  asymptotics, in
particular
when
$p'$ is close to $p$. 
 Although a comprehensive rigorous analysis of this case should be completed,
that is not our goal here. 
Instead we consider the  case where $ P=\{p\}$, where there is no tolerance to the value $p$, 
but a nontrivial tolerance in the design criteria $A$. That is, we
test ${\cal H}_{a,p}$ against
${\cal K}_{a',p}$ for some $a' \in A$.  For simplicity
we remove the $p$ from the notation of the
hypothesis spaces, that is, from now on ${\cal H}_{a,p}$ and ${\cal
K}_{a',p}$
are denoted by ${\cal H}_{a}$ and ${\cal K}_{a'}$  respectively.
We will show that
reducing the spaces of random variables further allows the
development and analysis of efficient tests and that this analysis is
quite elementary. The full problem of testing ${\cal H}_{a,p}$ against
${\cal K}_{a',p'}$ for  $a' \in A$ and $p' \in P$ where both
tolerance intervals are nontrivial might be accomplished through
 a combination of the
above mentioned analysis and the results herein. 
To reduce the null and alternative
hypothesis sets we will consider random variables $U$ which are
generated as $U=F(X)$ by real functions $F:X \rightarrow \R$ where each $X$
is a vector random variable. We make assumptions on this set of
functions and vector random variables that guarantee the
degree of concentration of $U$ about its mean in terms 
of a concentration parameter ${\cal D}$ (all this will be clarified below). We
denote by ${\cal U}_{{\cal D}}$  the
resulting space of real-valued random variables 
and reduce the null and alternative hypothesis spaces
accordingly.    
 Having performed this reduction,
 we will demonstrate how to construct tests of
$\PR(U\geq A) \geq p$ in terms of ${\cal D}$ and describe their type I and II errors.
In
addition, we observe that if $A$ is large enough compared to ${\cal D}$ we can obtain 
 tests with small type I and II errors. We disregard measurability
 considerations.
In many applications, we want to validate a model or certify a physical system
 in the deployment regime where the real physical system is impossible or expensive to sample. In Section \ref{sec_xtrap} we obtain the first results, as far as we can tell,
 for this extrapolation problem.

To apply these results to validation, we let $U=F(X)$ denote a measure of a model's fit to a physical system with respect to
a quantity of interest. For example if,
for the value $x$ of the random variable $X$, the model predicts the strength of a material to be
$s_{M}(x)$ and the physical system obtains the strength $s_{Ph}(x)$ then we might define
$F(x):= \frac{1}{|s_{M}(x)-s_{Ph}(x)|}$. Then surpassing the performance threshold $a$ is equivalent
to $|s_{M}(x)-s_{Ph}(x)| \leq \frac{1}{a}$. We apply the above mentioned result to obtain a solution to the validation problem of constructing a test
of ${\cal H}_{a}$ against ${\cal K}_{a'}$  using samples from $F(X)$ which has small type I and II
errors.  
To apply this result to certification, we let $F(X)$  be the performance
 of the physical system and $M(X)$ be the performance of the physical system predicted by the model.
 For example, let $F(x):= s_{Ph}(x)$ be the strength  
of the physical system and $M(x):= s_{M}(x)$ be the strength of the physical system simulated by 
the model.  We apply the above mentioned result to obtain a solution to the certification problem of constructing a test of ${\cal H}_{a}$ against ${\cal K}_{a'}$
using samples from $F(X)$ and $M(X)$ which has small type I and II
errors.
Using the above mentioned tests, we observe in a quantitative way the intuitive
result that
 if the validation diameter
$ {\cal D}_{F-M}$   is much smaller than the model diameter ${\cal D}_{M}$,
 then we need
 much fewer samples of the real physical system $F$ than the
model $M$ to certify the performance of $F$.
In Remark \ref{rem_ortiz} we describe the connection to the rigorous validation and
certification results 
 of Lucas, Owhadi, and Ortiz \cite{LuOwOr08a}.
These results generalize easily to other concentration
inequalities. In particular, using concentration theorems for non $i.i.d.$ sampling we can, with a
substantial increase in complexity, obtain good tests when the empirical data are not generated
$i.i.d.$ or when the components of the random vector $X$ are not independent. 
These tests and bounds on their performance
require knowing the  values of the diameter ${\cal D}_{F}$  for validation
and ${\cal D}_{F-M}$ and
${\cal D}_{M}$ for certification.
Since good approximations to these values may not be known in practice, 
we show, beginning in Section \ref{sec_est}, how to estimate them
to derive equally powerful tests,
replacing the assumption of the values of the concentration parameters
with much weaker assumptions. These tests provide validation and certification tests with
{\em estimated diameters}.

\section{Validation and Certification with Known Diameters}

Let us first describe the concentration parameter ${\cal D}$ mentioned above,
Let $(\Omega,{\cal F}, \PR)$ denote a probability space and consider a
product space ${\cal X}={\cal X}^{1}\times\cdots \times {\cal X}^{m}$.
 We call  a mapping
$X:\Omega \rightarrow {\cal X}$ a random vector with
range ${\cal X}$  and will abuse notation by also using the symbol $\PR$
 for
the image probability measure on ${\cal X}$.
For a function $F: {\cal X} \rightarrow \R$ we
define the partial diameters to be
\begin{equation}
\label{def_diameter}
  D^{F}_{j} = \sup_{x_{k}=x'_{k},\, k \neq j}
{ \bigl( F(x) -F(x') \bigr)} \quad j=1,..,m
\end{equation}
where the supremum is taken over all $x, x' \in {\cal X}$
 which differ only in their $j$-th component.
Let  $ {\cal D}_{F}^{2}
:= \sum_{j=1}^{m}{( D_{j}^{F})^{2}}$ define the McDiarmid diameter ${\cal
D}_{F}$ of
the function
$F$.
For a vector random variable $X$ and function $F:{\cal X}
\rightarrow \R$ we consider the random variable $ 
 F \circ X: \Omega \rightarrow \R$ which we also denote by $F$.
For  the  random variable $F$ we have McDiarmid's inequality
\cite[Thm.~3.1, pg.~206]{McDiarmid91}:
\begin{theorem}
\label{conc_mcdiarmid}
Let ${\cal X}={\cal X}^{1}\times\cdots \times {\cal X}^{m}$ be  a
Cartesian product and
let $F:{\cal X} \rightarrow \R$ have the McDiarmid diameter $ {\cal D}_{F}$.
  Then for any product probability measure $\PR=\mu_{1} \otimes
\cdots
\otimes \mu_{m}$ we have
\[ \PR(F - \E F \geq r ) \leq e^{-\frac{2r^{2}}{{\cal D}_{F}^{2}}}\]
\end{theorem}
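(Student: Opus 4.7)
The plan is to follow the classical martingale approach due to McDiarmid himself. I would first construct the Doob martingale associated with $F$ and the product structure of $\PR$. Concretely, let ${\cal F}_k$ be the $\sigma$-algebra generated by $X_1, \ldots, X_k$ (with ${\cal F}_0$ trivial), and set $Z_k := \E[F \mid {\cal F}_k]$ so that $Z_0 = \E F$, $Z_m = F$, and the differences $V_k := Z_k - Z_{k-1}$ form a martingale difference sequence. The whole argument is then to control the moment generating function of $\sum_k V_k$ and apply a Chernoff bound.

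Second, I would prove a conditional range bound: on the event $\{X_1=x_1,\ldots,X_{k-1}=x_{k-1}\}$, the random variable $V_k$ lies in an interval of length at most $D_k^F$. This is where independence of the coordinates enters crucially. By Fubini, for fixed $x_1,\ldots,x_{k-1}$ one can write
\[
Z_k - Z_{k-1} = \int\!\!\cdots\!\!\int \bigl(F(x_1,\ldots,x_{k-1},X_k,y_{k+1},\ldots,y_m) - F(x_1,\ldots,x_{k-1},y_k,y_{k+1},\ldots,y_m)\bigr)\,d\mu_{k+1}\cdots d\mu_m\,d\mu_k(y_k),
\]
and the inner integrand is bounded by $D_k^F$ by definition of the partial diameter (\ref{def_diameter}). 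Taking $\sup$ and $\inf$ over $X_k$ in the resulting expression gives the required interval of length at most $D_k^F$.

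Third, I would invoke Hoeffding's lemma: if $W$ is a random variable with $\E W = 0$ and $W$ lying in an interval of length $c$, then $\E e^{\lambda W} \leq e^{\lambda^2 c^2/8}$. Applied conditionally on ${\cal F}_{k-1}$ to $V_k$, this yields $\E[e^{\lambda V_k}\mid{\cal F}_{k-1}] \leq e^{\lambda^2 (D_k^F)^2/8}$. Iterating via the tower property gives
\[
\E e^{\lambda (F - \E F)} \ \leq \ \exp\!\Bigl(\tfrac{\lambda^2}{8}\sum_{j=1}^{m}(D_j^F)^2\Bigr) \ = \ \exp\!\Bigl(\tfrac{\lambda^2 {\cal D}_F^2}{8}\Bigr).
\]
Markov's inequality then gives $\PR(F - \E F \geq r) \leq e^{-\lambda r + \lambda^2 {\cal D}_F^2/8}$, and optimizing in $\lambda$ by taking $\lambda = 4r/{\cal D}_F^2$ produces the stated bound $e^{-2r^2/{\cal D}_F^2}$.

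The main obstacle, and the only step that requires genuine care, is the conditional range bound in the second step: one must exploit independence to rewrite the conditional expectations as integrals over the remaining coordinates against a product measure, and then replace the randomness in $X_k$ by a supremum/infimum that is controlled by $D_k^F$. Hoeffding's lemma itself is a short convexity argument (using that $e^{\lambda x}$ is dominated on $[a,b]$ by the affine interpolant and optimizing the resulting cumulant bound), and the Chernoff optimization is a routine calculus exercise.
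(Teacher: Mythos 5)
The paper does not prove this theorem; it is quoted directly from McDiarmid's survey (cited as \cite[Thm.~3.1, pg.~206]{McDiarmid91}). Your martingale argument --- Doob decomposition of $F$ along the coordinate filtration, the conditional range bound $D_k^F$ obtained via independence/Fubini, Hoeffding's lemma applied conditionally, the tower property, and the Chernoff optimization $\lambda = 4r/{\cal D}_F^2$ yielding $e^{-2r^2/{\cal D}_F^2}$ --- is correct and is in fact McDiarmid's own proof of the bounded differences inequality, so it matches the cited source rather than anything in this paper.
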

 If, for
 $ 0 < t < 1$, we define
\[r_{t}  \Defi \frac{{\cal D}_{F}}{\sqrt{2}}  \sqrt{ \log{t^{-1}}}\, \]
then we have the following useful inequalities:
\begin{eqnarray*}
\label{rt}
&\PR(F - \E F \geq  r_{t})& \leq t\\
& \PR(F - \E F >  r_{t}) & < t\\
&\PR(F - \E F \leq  -r_{t})& \leq t\\
& \PR(F - \E F <  - r_{t}) & < t . 
\end{eqnarray*}
Since this theorem's only dependence on $F$ and $X$ is through the
parameter $ {\cal D}_{F}$ we can define the subset ${\cal U}_{{\cal D}} \subset {\cal U}$
consisting of all real-valued random variables generated as $U=F(X)$
for some $F$ and $X$
such that ${\cal D}_{F} \leq  {\cal D}.$
Let
  ${\cal
H}^{{\cal D}}_{a} := {\cal H}_{a} \cap {\cal U}_{{\cal D}}$ and  ${\cal
K}^{{\cal D}}_{a}
:= {\cal K}_{a}\cap {\cal U}_{{\cal D}}$ denote null and alternative
generated
in this way and consider testing ${\cal
H}^{{\cal D}}_{a}$ against ${\cal K}^{{\cal D}}_{a'}$. 
We are now ready to state our main result which we then use to establish both validation and
certification results.
We describe a test of ${\cal H}^{{\cal D}}_{a}$ against
${\cal K}^{{\cal D}}_{a'}$ for  $a-a'$ bounded below in terms of ${\cal D}$ and $p$.
Therefore if $[a' ,a] \subset A$, then
the following result provides   a test of $\PR\bigl(U \geq A\bigr) \geq p$, with
bounds on its errors.
 Note that the test is in terms of the value of
a function  $F':Y \rightarrow \R$ for some random vector $Y$ with the
only constraint being $\E F' =\E U$. All tests in this work accept
the null ${\cal H}^{{\cal D}}_{a}$ by producing $T=1$ and reject otherwise. Recall that the
type I error is defined by $\theta_{1}(U) :=
\PR(T=0), U \in {\cal H}^{{\cal D}}_{a}$
and the type II error is defined by $\theta_{2}(U) :=
\PR(T=1), U \in {\cal
K}^{{\cal D}}_{a'}. $

\begin{theorem}
\label{thm_main}
Let $ 0 < p <1$,  ${\cal D}, {\cal D}'> 0$ and consider $ U \in {\cal U}_{{\cal D}}$. Moreover, consider a vector random
variable $Y$  and a function 
 $F':Y
\rightarrow \R$ with diameter ${\cal D}_{F'} \leq {\cal D}'$
such that $\E F'=\E U$.
For  $0 < t <1$
define
$r_{t} :=  \frac{ {\cal D}}{\sqrt{2}} \sqrt{   \log{t^{-1}}}$
and $r'_{t} := \frac{  {\cal D}'}{\sqrt{2}} \sqrt{   \log{t^{-1}}}$.
Let
   $ 0 < \d_{1}, \d_{2} < 1,$ 
and   let $a$ and $a'$ satisfy
 $a-a' \geq r_{p} +r_{1-p} +r'_{\d_{1}}+r'_{\d_{2}}$ so that the interval $[a'+r_{1-p}
 +r'_{\d_{2}},
  a-r_{p} -r'_{\d_{1}}]$ is nonempty. Let $b \in
[a'+r_{1-p} +r'_{\d_{2}},
  a-r_{p} -r'_{\d_{1}}].$
Then the  test $T$ of  ${\cal H}^{\cal D}_{a}$  against ${\cal K}^{\cal D}_{a'}$
defined by
\[T \Defi
\begin{cases} 1, & F'(y) \geq b\\
              0, & F'(y) <b \, 
\end{cases}
\]
satisfies
\[  \theta_{1}(U)  < \d_{1}\, ,
\]
\[\theta_{2}(U) 
\leq \d_{2} \, .\]
\end{theorem}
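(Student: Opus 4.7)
The plan is to prove both error bounds via two essentially symmetric applications of McDiarmid's inequality (Theorem \ref{conc_mcdiarmid}), using the identity $\E F' = \E U$ as the bridge between the variable $U$ under test and the statistic $F'(Y)$ on which the test is evaluated.

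For the type I error I begin with $U \in {\cal H}^{\cal D}_{a}$, i.e.\ $\PR(U \geq a) \geq p$, and convert this tail-probability hypothesis into a lower bound on $\E U$. If $\E U \geq a$ the desired bound $\E U \geq a - r_{p}$ is trivial; otherwise $a - \E U > 0$, and Theorem \ref{conc_mcdiarmid} applied with $r = a - \E U$ gives
\[ p \leq \PR(U \geq a) = \PR(U - \E U \geq a - \E U) \leq e^{-2(a - \E U)^{2}/{\cal D}^{2}}, \]
which, after taking logarithms and rearranging, yields $\E U \geq a - r_{p}$. Combining this with $\E F' = \E U$ and the upper endpoint constraint $b \leq a - r_{p} - r'_{\d_{1}}$ produces $\E F' - b \geq r'_{\d_{1}}$, and then
\[ \PR\bigl(F'(Y) < b\bigr) \leq \PR\bigl(F' - \E F' < -r'_{\d_{1}}\bigr) < \d_{1}, \]
where the last inequality is the strict lower-tail bound listed directly below Theorem \ref{conc_mcdiarmid}, which itself follows from applying McDiarmid to $-F'$ (whose McDiarmid diameter equals that of $F'$ and hence is at most ${\cal D}'$).

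The type II analysis is entirely symmetric. Starting from $U \in {\cal K}^{\cal D}_{a'}$ we have $\PR(U \geq a') < p$, hence $\PR(U < a') > 1 - p$. Applying McDiarmid to $-U$ in the analogous way yields $\E U < a' + r_{1-p}$. Using $\E F' = \E U$ together with the lower endpoint constraint $b \geq a' + r_{1-p} + r'_{\d_{2}}$ gives $b - \E F' > r'_{\d_{2}}$, and then the strict upper-tail form of McDiarmid applied to $F'$ gives
\[ \PR\bigl(F'(Y) \geq b\bigr) \leq \PR\bigl(F' - \E F' > r'_{\d_{2}}\bigr) < \d_{2}, \]
which implies the claimed $\theta_{2}(U) \leq \d_{2}$.

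The one step that takes some care is the ``inversion'' of a tail-probability hypothesis on $U$ into a mean bound: one must peel off the trivial case in which the desired inequality on $\E U$ already holds a priori, and one must track strict versus non-strict inequalities so that the strict forms of the tail bounds following Theorem \ref{conc_mcdiarmid} can be invoked at the end. Beyond this bookkeeping, the argument needs no further ingredients; the assumption $a - a' \geq r_{p} + r_{1-p} + r'_{\d_{1}} + r'_{\d_{2}}$ is precisely what guarantees that the upper bound on $\E F'$ forced by the alternative and the lower bound forced by the null are compatible, so the existence of an admissible $b$ is automatic and not an additional constraint.
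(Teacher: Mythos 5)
Your proposal is correct and takes essentially the same route as the paper. You reproduce the content of the paper's Lemma \ref{lem_mean} (deriving the mean bounds $\E U \geq a - r_{p}$ under the null and $\E U < a' + r_{1-p}$ under the alternative, by a direct case-split rather than by contradiction, an immaterial difference) and then apply McDiarmid to $F'$ via $\E F' = \E U$, which is exactly the paper's Lemma \ref{lem_main} with the endpoint constraints on $b$ substituted in from the start.
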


The condition $\E F' =\E U$ of Theorem \ref{thm_main} 
can be easily satisfied when i.i.d.~samples are
available. Therefore, in this case, it is straightforward to use
Theorem
\ref{thm_main} to define  tests, with guarantees on their errors, for both {\em validation} and {\em certification}.
\begin{corollary}[Validation]
\label{cor_val}
Let $U=F(X)$ and suppose ${\cal D} \geq {\cal D}_{F}$. 
Let $F(X_{i}),i=1,..,n$ be i.i.d. samples of $F(X)$
and define
  $ \langle F\rangle_{n}
 := \frac{1}{n}\sum_{i=1}^{n}F(X_{i})$ to be the sample mean.
Let $0 < p,\d_{1},\d_{2} < 1$  and for $0 < t < 1 $
define
$r_{t}  := \frac{{\cal D}}{\sqrt{2}} \sqrt{   \log{t^{-1}}}$.
Moreover,  let $a$ and $ a'$ satisfy
 $a-a'\geq r_{p} +r_{1-p} +\frac{1}{\sqrt{n}}r_{\d_{1}}
+\frac{1}{\sqrt{n}}r_{\d_{2}}$ so that the interval $[a'+r_{1-p} +\frac{1}{\sqrt{n}}r_{\d_{2}},  
  a-r_{p} -\frac{1}{\sqrt{n}}r_{\d_{1}}]$ is nonempty.
Let $b\in [a'+r_{1-p} +\frac{1}{\sqrt{n}}r_{\d_{2}}, 
  a-r_{p} -\frac{1}{\sqrt{n}}r_{\d_{1}}]$ and 
consider the test
 $T$ of ${\cal H}^{{\cal D}}_{a}$ against ${\cal K}^{{\cal D}}_{a'}$
 defined by
\[T \Defi
\begin{cases} 1\, ,  &   \langle F\rangle_{n}
 \geq b\\
              0 \, ,  &
\langle F\rangle_{n} <b \, .
\end{cases}
\]
Then we have
\[  \theta_{1}(U)  < \d_{1}\, ,
\]
\[\theta_{2}(U) 
\leq \d_{2} \, .\]
\end{corollary}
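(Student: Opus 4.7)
The plan is to derive this as a direct specialization of Theorem \ref{thm_main}, with the ``auxiliary'' function $F'$ chosen to be the sample mean itself. Concretely, I set $Y := (X_1,\dots,X_n) \in \mathcal{X}^n$ (a product space of $nm$ factors if $\mathcal{X}$ has $m$ factors) and define $F'(x_1,\dots,x_n) := \frac{1}{n}\sum_{i=1}^n F(x_i)$, so that $F'(Y) = \langle F\rangle_n$. Since the samples are i.i.d.\ and $\E F = \E U$, the identity $\E F' = \E U$ is immediate, and the product structure of the law of $Y$ is exactly what is needed for the McDiarmid assumption underlying Theorem \ref{thm_main}.

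The one quantitative point to check is that the McDiarmid diameter of $F'$ satisfies $\mathcal{D}_{F'} \leq \mathcal{D}/\sqrt{n}$. Changing a single coordinate $x_{i,k}$ (the $k$-th component of the $i$-th sample) only affects the summand $\frac{1}{n}F(x_i)$, so the corresponding partial diameter is $\frac{1}{n}D_k^F$. Summing squares over the $nm$ coordinates gives
\[
\mathcal{D}_{F'}^2 \;=\; \sum_{i=1}^n \sum_{k=1}^m \frac{1}{n^2}(D_k^F)^2 \;=\; \frac{1}{n}\,\mathcal{D}_F^2 \;\leq\; \frac{\mathcal{D}^2}{n},
\]
so I may take $\mathcal{D}' := \mathcal{D}/\sqrt{n}$ in Theorem \ref{thm_main}. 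With this choice the quantities $r'_t$ of Theorem \ref{thm_main} become precisely $\frac{1}{\sqrt{n}} r_t$, which matches the thresholds appearing in the statement of the corollary.

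With $F'$ and $\mathcal{D}'$ so chosen, the hypotheses of Theorem \ref{thm_main} (nonemptiness of the interval for $b$, membership of $b$ in that interval, and the shape of the test) are identical to those assumed in the corollary; the test $T$ defined here is literally the test of Theorem \ref{thm_main} applied to $F'(Y) = \langle F\rangle_n$. The type~I and type~II bounds $\theta_1(U) < \delta_1$ and $\theta_2(U) \leq \delta_2$ then follow by invoking Theorem \ref{thm_main}. There is no real obstacle: the only nontrivial step is the diameter computation for the sample mean, and that is the routine $\ell_2$-aggregation of partial diameters displayed above.
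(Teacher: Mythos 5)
Your proposal is correct and follows essentially the same route as the paper's own proof: define $F'$ on the $n$-fold product as the sample mean, compute its McDiarmid diameter coordinate-by-coordinate to obtain $\mathcal{D}_{F'} \leq \mathcal{D}_F/\sqrt{n} \leq \mathcal{D}/\sqrt{n}$, verify $\E F' = \E U$, and invoke Theorem~\ref{thm_main} with $\mathcal{D}' = \mathcal{D}/\sqrt{n}$.
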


As discussed in the introduction, if  $F(X)$ represents a physical
system and 
 $M(X)$ a model of that system  we can consider how to
 test the performance
of $F$ by
decomposing $F(X) =M(X)+\bigl(F(X)-M(X)\bigr)$ into the model
component and the model deviation component. If the test accepts
we obtain certification.
We now show how to sample the model and the model deviation to test
the performance of the physical system. In particular,  
 the following result  shows
that if the validation diameter  $ {\cal D}_{F-M}$   is much smaller then the model diameter ${\cal D}_{M}$,
 then we need
 much fewer samples of the real physical system $F$ than the
model $M$ to certify the performance of $F$. It is phrased in terms of a general
decomposition $F=F_{1}+F_{2}$.

\begin{corollary}[Certification] 
\label{cor_cert}
Let $U=F(X)$ where
 $F := F_{1}+F_{2}$ is the sum of two functions with diameters
${\cal D}_{F_{1}}$ and ${\cal D}_{F_{2}}$. Let ${\cal D}, {\cal D}_{1}$, and ${\cal
D}_{2}$ satisfy $ {\cal D}_{1}\geq {\cal D}_{F_{1}}$,
${\cal D}_{2}\geq {\cal D}_{F_{2}}$  and  ${\cal D} \geq 
{\cal D}_{1}+{\cal D}_{2}$.
Let $F_{1}(X_{i}),i=1,..,n_{1}$ be i.i.d. samples of $F_{1}(X)$
and define
  $ \langle F_{1}\rangle_{n_{1}}
 := \frac{1}{n_{1}}\sum_{i=1}^{n_{1}}F_{1}(X_{i})$ to be the sample mean. Also let
$F_{2}(X_{i}),i=n_{1}+1,.., n_{1}+n_{2}$ be i.i.d. samples of $F_{2}(X)$
and define
  $ \langle F_{2}\rangle_{n_{2}}
 := \frac{1}{n_{2}}\sum_{i=n_{1}+1}^{n_{1}+n_{2}}F_{2}(X_{i})$ to be the sample mean of
the second set of samples.
For $0 < t < 1 $ define
\[
\r_{t}  \Defi
\frac{1}{\sqrt{2}}\sqrt{\frac{{\cal D}_{1}^{2}}{n_{1}}+\frac{{\cal D}_{2}^{2}}{n_{2}}}
\sqrt{   \log{t^{-1}}}
\]
and
\[r_{t}  \Defi
\frac{{\cal D}}{\sqrt{2}}
\sqrt{   \log{t^{-1}}}
\]
Let  $0 < \d_{1},\d_{2} < 1$ and suppose that $a$ and $a'$ satisfy
 $a-a'\geq
r_{p} +r_{1-p}+\r_{\d_{1}} +\r_{\d_{2}}$ so that the interval $[a'+r_{1-p} +\r_{\d_{2}}, a-
r_{p} -\r_{\d_{1}}]$ is nonempty.  Let   $b \in  
[a'+r_{1-p} +\r_{\d_{2}}, a-
r_{p} -\r_{\d_{2}}]$ and
consider the test
 $T$ of ${\cal H}^{\cal D}_{a}$ against ${\cal K}^{\cal D}_{a'}$
 defined by
\[T \Defi
\begin{cases} 1\, ,  &   \langle F_{1}\rangle_{n_{1}}+ \langle
F_{2}\rangle_{n_{2}}
 \geq b\\
              0 \, ,  &  
\langle F_{1}\rangle_{n_{1}}+ \langle F_{2}\rangle_{n_{2}} <b \, .
\end{cases}
\]
Then $U \in {\cal U}_{{\cal D}}$ and   
 we have
\[  \theta_{1}(U)  < \d_{1}\, ,
\]
\[\theta_{2}U) 
\leq \d_{2} \, .\]

Moreover, suppose ${\cal D}_{2}\leq
{\cal D}_{1}$,   $n_{2} \geq
\frac{{\cal D}_{2}}{{\cal D}_{1}} n_{1}$ and  define 
 \begin{equation}
\label{r_def2}
\r_{t}  \Defi \frac{{\cal D}_{1}}{\sqrt{n_{1}}}
\sqrt{  \log{t^{-1}}}\, 
\end{equation}
in the above conditions on $a,a'$ and the test parameter $b$.
Then for any ${\cal D} \geq 2{\cal D}_{1}$ we have  $U \in {\cal U}_{{\cal D}},$  and
 \[  \theta_{2}(U)  < \d_{1}\, ,
\]
\[\theta_{2}(U) 
\leq \d_{2} \, .\]
\end{corollary}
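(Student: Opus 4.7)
The plan is to reduce this corollary to a direct application of Theorem \ref{thm_main}, with the role of the auxiliary random variable $Y$ and function $F'$ played by the $n_1+n_2$ i.i.d.\ samples and the combined sample mean. Set $Y := (X_1,\ldots,X_{n_1+n_2})$ on the product space, and define
\[
F'(y_1,\ldots,y_{n_1+n_2}) \Defi \frac{1}{n_1}\sum_{i=1}^{n_1} F_1(y_i) + \frac{1}{n_2}\sum_{i=n_1+1}^{n_1+n_2} F_2(y_i).
\]
By linearity of expectation and the i.i.d.\ assumption, $\E F' = \E F_1(X) + \E F_2(X) = \E F(X) = \E U$, which is the coupling hypothesis of Theorem \ref{thm_main}.

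Next I would compute the McDiarmid diameter of $F'$. Changing only the $i$-th coordinate with $i\leq n_1$ alters $F'$ by at most $D_i^{F_1}/n_1$, and changing a coordinate with $i>n_1$ alters it by at most $D_{i-n_1}^{F_2}/n_2$; squaring and summing gives
\[
\mathcal{D}_{F'}^{2} \;\leq\; \frac{\mathcal{D}_{F_1}^{2}}{n_1} + \frac{\mathcal{D}_{F_2}^{2}}{n_2} \;\leq\; \frac{\mathcal{D}_1^{2}}{n_1} + \frac{\mathcal{D}_2^{2}}{n_2}.
\]
Taking square roots and multiplying by $\sqrt{\log t^{-1}}/\sqrt{2}$ shows that the $r'_t$ of Theorem \ref{thm_main} is bounded by the $\r_t$ defined in the statement. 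I also need $U\in\mathcal{U}_{\mathcal{D}}$; this follows from the Minkowski-type inequality $\mathcal{D}_{F_1+F_2}\leq \mathcal{D}_{F_1}+\mathcal{D}_{F_2}$ (apply the triangle inequality coordinatewise to the partial diameters, then use Minkowski in $\R^{m}$), which yields $\mathcal{D}_F \leq \mathcal{D}_1 + \mathcal{D}_2 \leq \mathcal{D}$. With these two facts in hand, the first half of the corollary is immediate from Theorem \ref{thm_main}.

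For the second half, under $\mathcal{D}_2\leq\mathcal{D}_1$ and $n_2\geq \frac{\mathcal{D}_2}{\mathcal{D}_1}n_1$ I would simply bound
\[
\frac{\mathcal{D}_1^{2}}{n_1}+\frac{\mathcal{D}_2^{2}}{n_2} \;\leq\; \frac{\mathcal{D}_1^{2}}{n_1}+\frac{\mathcal{D}_2\mathcal{D}_1}{n_1} \;\leq\; \frac{2\mathcal{D}_1^{2}}{n_1},
\]
so $\tfrac{1}{\sqrt{2}}\bigl(\mathcal{D}_1^{2}/n_1+\mathcal{D}_2^{2}/n_2\bigr)^{1/2}\sqrt{\log t^{-1}} \leq \mathcal{D}_1/\sqrt{n_1}\cdot\sqrt{\log t^{-1}}$. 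The new $\r_t$ is therefore an upper bound on the original, so the hypothesis on $a-a'$ with the new $\r_t$ is \emph{stronger} than the hypothesis for the first half and the chosen $b$ still lies in the original admissible interval; the first-half conclusion then gives the error bounds. Finally, $\mathcal{D}\geq 2\mathcal{D}_1 \geq \mathcal{D}_1+\mathcal{D}_2$ again ensures $U\in\mathcal{U}_{\mathcal{D}}$.

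There is no real obstacle here: the only mildly nontrivial step is verifying the Minkowski-type subadditivity of the McDiarmid diameter used for $F=F_1+F_2$, and correctly identifying $F'$ as a function on the enlarged product space so that its partial diameters combine as a weighted sum of squares to reproduce exactly the $\r_t$ appearing in the statement.
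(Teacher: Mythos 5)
Your proposal is correct and follows essentially the same route as the paper: both define $F'$ as the combined sample mean on the $(n_1+n_2)$-fold product space, verify $\E F'=\E U$, compute $\mathcal{D}_{F'}^2 \leq \mathcal{D}_1^2/n_1 + \mathcal{D}_2^2/n_2$ via the coordinatewise partial-diameter bounds, establish $U\in\mathcal{U}_{\mathcal{D}}$ from the $\ell_2$ triangle inequality applied to $D_j^{F}\leq D_j^{F_1}+D_j^{F_2}$, and then invoke Theorem \ref{thm_main}. The only cosmetic difference is that in the second half the paper re-applies Theorem \ref{thm_main} directly with $\mathcal{D}'=\mathcal{D}_1\sqrt{2/n_1}$, whereas you observe that the enlarged $\r_t$ makes the new hypotheses strictly stronger than the first half's and deduce the error bounds from there; these are equivalent.
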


\begin{remark}[Connection with Lucas, Owhadi and Ortiz \cite{LuOwOr08a}]
\label{rem_ortiz}
If in  Theorem \ref{thm_main} we define 
$b :=  a'+r_{1-p} +r'_{\d_{2}}$ in terms of a performance
value $a'$ 
 the condition for acceptance
in Theorem \ref{thm_main} can be
written 
$ \langle F \rangle_{n}
-a'-r'_{\d_{2}} \geq r_{1-p}$ 
which amounts to
\[\frac{ \langle F \rangle_{n}
-a'-\frac{  {\cal D}'}{\sqrt{2}} \sqrt{   \log{{\d_{2}}^{-1}}}}{{\cal D}}
\geq 
\sqrt{\frac{1}{2}\log{(1-p)^{-1}}}\, .\]
For the case of validation in Corollary \ref{cor_val} we have
${\cal D}_{F'} \leq \frac{1}{\sqrt{n}}{\cal D}_{F}$ and so with $p=1-\epsilon$, $\d_{2}
=\epsilon'$,
${\cal D}:= {\cal D}_{F}$, and ${\cal D}':=  \frac{1}{\sqrt{n}}{\cal D}_{F}$ the test of Corollary
\ref{cor_val}  amounts  essentially (with a $\sqrt{2}$ better multiplicative
factor in last term on the left) to the
 validation criterion of
Lucas, Owhadi, and Ortiz
\cite[Eqn.~40]{LuOwOr08a}
for the exact model with single
performance measure (their Scenario 3). 
For certification, 
 we can apply
 Corollary \ref{cor_cert} with the
choice  $F_{1}$ as their
$F$ and $F_{2}$ as their $G-F$.  Using the inequality
 ${\cal D}_{F_{1}+F_{2}} \leq  {\cal D}_{F_{1}} +{\cal D}_{F_{2}}$ and
setting $n_{1}=n_{2}$ and $\d_{2} =2\epsilon'$  
we again obtain essentially (in a similar way as mentioned above)
the certification criteria  of
\cite[Eqns.~58\&59]{LuOwOr08a}. 
Moreover,   Corollaries \ref{cor_val} and \ref{cor_cert}
 show we can interpret the certification criteria of
\cite{LuOwOr08a} as  guarantees that the type II error is less than
$\d_{2}$.    If we then select $a$ such that $a -a' \geq r_{p} +r_{1-p}
+\r_{\d_{1}}+\r_{\d_{2}}$ we can also assert that the type I
error is less than $\d_{1}$. In particular, if the design parameter
value $a'$ can tolerate being moved  so that
 $ a - a' \geq   r_{p} +r_{1-p}
+\r_{\d_{1}}+\r_{\d_{2}}$ with $\d_{1}$ and $\d_{2}$ small, this criterion
amounts to a hypothesis test with type I and II errors bounded by $\d_{1}$ and $\d_{2}$
respectively. In this sense the criteria of
\cite{LuOwOr08a} appear to correspond with our hypothesis test but with
the roles of the hypothesis spaces  ${\cal H}_{a}$  and $ {\cal K}_{a'
}$ reversed.
\end{remark}

\begin{remark}[Connection with QMU]
\label{rem_qmu}
For a detailed discussion of the QMU framework please see \cite{QMU09,PiTrHe06,ShWo03}.
In the QMU framework, the confidence is evaluated in terms of a ratio $\frac{M}{U}$ where
$M$ is a margin and $U$ is an uncertainty. The National Research Council of the National
Academies report \cite[Finding 1-1]{QMU09} states that
"QMU is a sound and valuable framework that aids the assessment and evaluation of the
confidence in the nuclear weapons stockpile." However it also states
 "There are
serious and difficult problems to be resolved in uncertainty quantification, however,
including the physical phenomena that are modeled crudely or not at all, the possibility of
unknown unknowns, lack of computing power to guarantee the convergence of codes, and
insufficient attention to validating experiments. Finally, they state that
{\em "Even if the uncertainties arising from all of the different sources were estimated, their
aggregation into an overall uncertainty for a given quantity of interest is a problem that
needs further attention."}
Although we do not suggest that we can answer all these question now we can make some
conclusions along these lines using  the discussion of Remark \ref{rem_ortiz}.
For validation,  
 consider the ratio
\[\frac{ \langle F \rangle_{n}
-a'}{{\cal D}}
\] where the numerator is a "margin" and the denominator is an "uncertainty". The
inequality 
\[\frac{ \langle F \rangle_{n}
-a'}{{\cal D}}
\geq 
\sqrt{\frac{1}{2}\log{(1-p)^{-1}}} +\frac{1}{  \sqrt{2n}} \sqrt{
\log{{\d_{2}}^{-1}}} \]  shows two
things. First it shows how we can interpret confidence. That is,
if $\PR(F \geq a') < p$,
namely if the performance is insufficient, then with
probability less than $\d_{2}$ will we accept the performance as sufficient.
Namely, our confidence is $\d_{2}$. Moreover, the precise definition of the uncertainty
parameter ${\cal D}$ shows how this parameter is aggregated so as to
maintain the interpretation of the confidence statement. For the certification problem
similar comments also apply but we get the added benefit of seeing how modeling
uncertainties and validation uncertainties are aggregated and combined and how they
influence the number of validation experiments needed compared to  the number of modeling runs.
\end{remark}

We have used McDiarmid's inequality Theorem \ref{conc_mcdiarmid} as
the model for concentration in this paper, but that is not necessary.
All that was needed  is a concentration
parameter ${\cal D}_{F}$ which scales a certain way with 
 sampling. In particular, concentration
theorems that do not require i.i.d.~sampling, for example the  martingale
difference inequality \cite[Thm.~3.14, Page 224]{McDiarmid91}, can be applied to derive results
similar, but more complex, to those obtained.   Another 
 example  of a concentration theorem is the following for 
 Lipschitz
functions, \cite[Cor.~1.17]{Ledoux01}: 
\begin{theorem}
\label{conc_lip}
Let ${\cal X}={\cal X}^{1}\times\cdots \times {\cal X}^{m}$ be the
Cartesian product of
metric
spaces $(X_{i},d_{i})$  with
diameters
$ D_{i},i=1,..,m$ and let ${\cal D}_{{\cal X}}^{2} :=
\sum_{i=1}^{m}{ D_{i}^{2}}$.
Let $F:{\cal X} \rightarrow \R$ be Lipschitz with
respect the $\ell_{1}$ metric $d := \sum_{i=1}^{m}d_{i}$ with
Lipschitz constant  $|F|.$
 Then for any product probability measure $\PR=\mu_{1}
\otimes
\cdots
\otimes \mu_{m}$  we have
\[ \PR(F - \E F \geq r ) \leq e^{-\frac{r^{2}}{2|F|^{2}{\cal
D}_{\cal X}^{2}}}\,
.\]
\end{theorem}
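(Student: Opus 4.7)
The plan is to obtain Theorem \ref{conc_lip} as a direct corollary of McDiarmid's inequality (Theorem \ref{conc_mcdiarmid}) by bounding the McDiarmid partial diameters $D_j^F$ in terms of the Lipschitz constant $|F|$ and the metric diameters $D_j$.

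First I would fix $j$ and consider any two points $x, x' \in {\cal X}$ that differ only in the $j$-th coordinate. By the definition of the $\ell_1$ product metric, $d(x,x') = d_j(x_j,x_j')$, and since $d_j(x_j,x_j') \leq D_j$, the Lipschitz hypothesis gives
\[
 |F(x) - F(x')| \ \leq\ |F|\, d(x,x')\ \leq\ |F|\, D_j.
\]
Taking the supremum over such pairs yields $D_j^{F} \leq |F|\,D_j$ for every $j$.

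Squaring and summing over $j = 1,\dots,m$ then gives
\[
 {\cal D}_F^2 \ =\ \sum_{j=1}^{m} (D_j^F)^2 \ \leq\ |F|^2 \sum_{j=1}^{m} D_j^2 \ =\ |F|^2\,{\cal D}_{\cal X}^2,
\]
so that $F$ has finite McDiarmid diameter bounded by $|F|\,{\cal D}_{\cal X}$. Applying Theorem \ref{conc_mcdiarmid} with the product measure $\PR = \mu_1 \otimes \cdots \otimes \mu_m$ then yields
\[
 \PR(F - \E F \geq r) \ \leq\ e^{-2r^2/{\cal D}_F^2} \ \leq\ e^{-2r^2/(|F|^2 {\cal D}_{\cal X}^2)},
\]
which is in fact stronger (by a factor of $4$ in the exponent) than the claimed bound $e^{-r^2/(2|F|^2 {\cal D}_{\cal X}^2)}$, so the claim follows immediately.

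There is no real obstacle here; the only thing to check carefully is that the $\ell_1$ product structure is what makes the Lipschitz estimate along a single coordinate collapse to $d_j$ alone (it would fail, for instance, for an $\ell_\infty$ metric with a genuinely multi-coordinate Lipschitz constant). If one wanted to recover Ledoux's constant directly rather than through McDiarmid, one would instead follow the tensorization argument for the exponential moment inequality of Bobkov--Ledoux, but since the paper only needs the scaling of the concentration parameter with sampling, the McDiarmid reduction above suffices.
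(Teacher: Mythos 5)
Your derivation is correct, and it is worth noting that the paper itself does not give a proof of this statement at all — it is simply quoted from Ledoux \cite[Cor.~1.17]{Ledoux01}. So your route (reduce to McDiarmid via the bound $D_j^F \leq |F|\,D_j$ on the partial diameters, then sum in quadrature to get ${\cal D}_F \leq |F|\,{\cal D}_{\cal X}$) is a genuinely different, self-contained argument. It is also sharper: you obtain $\exp(-2r^2/(|F|^2{\cal D}_{\cal X}^2))$, which beats the quoted bound $\exp(-r^2/(2|F|^2{\cal D}_{\cal X}^2))$ by a factor of $4$ in the exponent. The paper in fact acknowledges exactly this point later, in Section \ref{sec_est}, where it records the inequality ${\cal D}_F \leq |F|\,{\cal D}_{\cal X}$ and remarks that McDiarmid's inequality therefore gives a superior concentration guarantee to Theorem \ref{conc_lip}; your argument is essentially the proof of that remark. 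One small caution: your chain of inequalities relies on the implicit hypothesis that each coordinate diameter $D_i$ is finite (otherwise the quoted bound is vacuous anyway), and the reduction really does depend on the $\ell_1$ product metric collapsing to $d_j$ on a single-coordinate perturbation, as you observe. With that noted, the proposal is sound, and for the purposes of Proposition~2.6 it would even let one replace $2|F|\,{\cal D}_{\cal X}$ by $|F|\,{\cal D}_{\cal X}$.
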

The following, easy to prove, proposition shows that the previous results also apply using the
Lipschitz concentration Theorem \ref{conc_lip}.  
\begin{proposition}
Consider the concentration result and notation of Theorem
\ref{conc_lip}. Then
Theorem \ref{thm_main} and Corollaries 
\ref{cor_val} and \ref{cor_cert}
 hold with ${\cal D}$ replaced by 
$2|F| {\cal D}_{\cal X}$.
\end{proposition}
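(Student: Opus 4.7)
The plan is to observe that the Lipschitz concentration bound is, after a harmless reparametrization, of exactly the same functional form as McDiarmid's bound, and therefore every step in the earlier arguments transfers verbatim. Specifically, setting $\widetilde{\cal D} := 2|F|{\cal D}_{\cal X}$, Theorem \ref{conc_lip} gives
\[
\PR(F - \E F \geq r) \;\leq\; e^{-r^{2}/(2|F|^{2}{\cal D}_{\cal X}^{2})} \;=\; e^{-2r^{2}/\widetilde{\cal D}^{2}},
\]
which is literally the conclusion of Theorem \ref{conc_mcdiarmid} with $\widetilde{\cal D}$ in place of ${\cal D}_{F}$. The first step of the proof is simply to record this identity of bounds.

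From this identity the useful inequalities listed after Theorem \ref{conc_mcdiarmid} (the four statements about $\PR(F-\E F \geq r_t)$, etc.) hold in the Lipschitz setting with $r_{t} = \frac{\widetilde{\cal D}}{\sqrt{2}}\sqrt{\log t^{-1}}$. The proof of Theorem \ref{thm_main} only invokes McDiarmid through these $r_{t}$--deviation inequalities applied to $U$ and to $F'$ (via their diameters ${\cal D}$ and ${\cal D}'$); so I would simply re-run that proof with the Lipschitz diameters $2|F|{\cal D}_{\cal X}$ and $2|F'|{\cal D}_{{\cal Y}}$ in place of ${\cal D}_{F}$ and ${\cal D}_{F'}$, and the same bounds on type~I and II errors appear.

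For the corollaries I would verify the two scaling facts that were implicitly used in the McDiarmid setting. First, for Corollary \ref{cor_val}, the empirical average $\langle F\rangle_{n}$ viewed as a function on $({\cal X}^{1}\!\times\!\cdots\!\times\!{\cal X}^{m})^{n}$ with the $\ell_{1}$ product metric has Lipschitz constant $|F|/n$, while the product diameter is $\sqrt{n}\,{\cal D}_{\cal X}$; hence its effective Lipschitz ``diameter'' is $2(|F|/n)\sqrt{n}\,{\cal D}_{\cal X} = 2|F|{\cal D}_{\cal X}/\sqrt{n}$, i.e.\ the parameter divides by $\sqrt{n}$ exactly as in the McDiarmid case. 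Second, for Corollary \ref{cor_cert}, one uses $|F_{1}+F_{2}| \leq |F_{1}|+|F_{2}|$, which plays the role of the inequality ${\cal D}_{F_{1}+F_{2}}\leq{\cal D}_{F_{1}}+{\cal D}_{F_{2}}$ that was invoked implicitly, so that $2|F|{\cal D}_{\cal X}\leq 2|F_{1}|{\cal D}_{\cal X}+2|F_{2}|{\cal D}_{\cal X}$ plays the role of the bound ${\cal D}\geq{\cal D}_{1}+{\cal D}_{2}$.

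The work is essentially bookkeeping, and no step is genuinely hard; the only mild care needed is in the second paragraph above, namely checking that the Lipschitz constant and product diameter scale under $n$--fold i.i.d.\ sampling so as to reproduce the $1/\sqrt{n}$ factor, and checking subadditivity of the Lipschitz constant under the decomposition $F=F_{1}+F_{2}$. Once these two scaling observations are in hand, substituting ${\cal D}\mapsto 2|F|{\cal D}_{\cal X}$ throughout the statements of Theorem \ref{thm_main} and Corollaries \ref{cor_val} and \ref{cor_cert} yields the proposition.
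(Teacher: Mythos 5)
Your key observation---that the bound in Theorem \ref{conc_lip} is literally McDiarmid's bound with ${\cal D}_{F}$ replaced by $\widetilde{\cal D}=2|F|{\cal D}_{\cal X}$---is correct, and it correctly drives the extensions of Theorem \ref{thm_main} and of Corollary \ref{cor_val}: the paper offers no proof, calling the proposition ``easy to prove,'' and your formal-substitution strategy is surely what the authors had in mind. Your validation scaling is done carefully and correctly: the sample mean $\langle F\rangle_n$ on ${\cal X}^n$ has Lipschitz constant $|F|/n$ for the $\ell_1$ product metric, ${\cal D}_{{\cal X}^n}=\sqrt{n}\,{\cal D}_{\cal X}$, hence effective parameter $2(|F|/n)\sqrt{n}\,{\cal D}_{\cal X}=\widetilde{\cal D}/\sqrt{n}$, matching the $1/\sqrt{n}$ factor in Corollary \ref{cor_val}.

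The gap is in Corollary \ref{cor_cert}, where you only record the subadditivity $|F_1+F_2|\le|F_1|+|F_2|$ and declare the rest ``bookkeeping.'' The bookkeeping is not immediate. The statistic $F'=\langle F_1\rangle_{n_1}+\langle F_2\rangle_{n_2}$ lives on ${\cal X}^{n_1+n_2}$; its Lipschitz constant for the unweighted $\ell_1$ metric is $\max\bigl(|F_1|/n_1,\,|F_2|/n_2\bigr)$, and ${\cal D}_{{\cal X}^{n_1+n_2}}=\sqrt{n_1+n_2}\,{\cal D}_{\cal X}$, so the naive application of Theorem \ref{conc_lip} yields the effective parameter
\[
2\max\Bigl(\tfrac{|F_1|}{n_1},\tfrac{|F_2|}{n_2}\Bigr)\sqrt{n_1+n_2}\;{\cal D}_{\cal X},
\]
whereas Corollary \ref{cor_cert}'s $\r_t$, under the substitution ${\cal D}_i\mapsto 2|F_i|{\cal D}_{\cal X}$, needs the smaller quantity
\[
2{\cal D}_{\cal X}\sqrt{\tfrac{|F_1|^2}{n_1}+\tfrac{|F_2|^2}{n_2}}.
\]
Squaring both and using $\max(a,b)^2(n_1+n_2)\ge a^2 n_1 + b^2 n_2$ shows the naive bound is always at least as large, with strict inequality unless $|F_1|/n_1=|F_2|/n_2$. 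To recover the corollary one must not apply Theorem \ref{conc_lip} with the off-the-shelf metric but instead rescale each block of coordinates (give the $i$-th sample weight $|F_1|/n_1$ for $i\le n_1$ and $|F_2|/n_2$ for $i>n_1$) so that $F'$ becomes $1$-Lipschitz for the reweighted $\ell_1$ metric, whose squared diameter parameter is then exactly ${\cal D}_{\cal X}^2\bigl(|F_1|^2/n_1+|F_2|^2/n_2\bigr)$; equivalently, use the Lipschitz constant to bound the partial diameters of $F'$ coordinate-by-coordinate and feed those into McDiarmid. Either route is fine, but it is a genuine extra step and your proof does not provide it.
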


\section{Extrapolative Validation and Certification}
\label{sec_xtrap}
In this section we consider when we  want to validate a model or certify a physical system
 in a regime where the real physical system is impossible or expensive to sample. That is, suppose
 we wish to validate or certify a random variable
 $\hat{F}(\hat{X})$ which is expensive or impossible to sample but are able 
to sample a related random variable $F(X)$. 
 When samples from $\hat{F}(\hat{X})$ are unavailable we have the following validation result in terms of
the Kolmogorov distance
\begin{equation}
\label{kolmogorov}
d(F,\hat{F}) \Defi \max_{b}{
\bigl|\PR(F \leq b)-\PR(\hat{F} \leq b)\bigr|}
\end{equation}
between two
 random variables $F$ and $\hat{F}$.  The corresponding certification result is very similar, but we omit it
for brevity.

\begin{theorem}[Extrapolative Validation]
\label{thm_xval} 
Let $U=F(X)$ have McDiarmid diameter ${\cal D}_{F}$.
Let $F(X_{i}),i=1,..,n$ be i.i.d. samples of $F(X)$
and define
  $ \langle F\rangle_{n}
 := \frac{1}{n}\sum_{i=1}^{n}F(X_{i})$ to be the sample mean.
Let $0 < p < 1, \,0 < \d_{p} < \min{(p,1-p)}$ and suppose that $\hat{F}:{\hat{\cal X}} \rightarrow \R$ satisfies 
\[ d(F,\hat{F}) \leq \d_{p} \,.\] 
Let $0 < \d_{1},\d_{2} < 1$  and for $0 < t < 1 $
define
 $r_{\cal H} :=  \frac{1}{\sqrt{2}} \sqrt{
\log{(p-\d_{p})^{-1}}} +\frac{1}{\sqrt{2n}} \sqrt{
\log{\d_{1}^{-1}}} $ and
 $r_{\cal K} :=  \frac{1}{\sqrt{2}} \sqrt{
\log{(1-p-\d_{p})^{-1}}} +\frac{1}{\sqrt{2n}} \sqrt{
\log{\d_{2}^{-1}}}. $
Then if  $a-a'\geq  {\cal D}_{F}(r_{\cal H} +r_{\cal K})$  the test of
$
 \{ \PR(\hat{F} \geq a) \geq p\}$ versus
 $ \{ \PR(\hat{F} \geq a') < p\}$
defined by
\[T \Defi
\begin{cases} 1\, ,  &   \langle F\rangle_{n}
 \geq  a- {\cal D}_{F}r_{\cal H}\\              0 \, ,  &
\langle F\rangle_{n} < a- {\cal D}_{F}r_{\cal H} \,
\end{cases}\]
satisfies
\[ \theta_{1} \leq \d_{1}\, ,\]
\[\theta_{2} \leq \d_{2}\, .\]
\end{theorem}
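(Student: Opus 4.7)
The plan is to reduce the extrapolative test for $\hat F$ to an in-sample test for $F$ via the Kolmogorov bound, and then apply McDiarmid's inequality twice: once to convert the resulting tail hypothesis on $F$ into a bound on $\E F$, and once to control $\langle F\rangle_n - \E F$.

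First I would translate the hypotheses using $d(F,\hat F)\leq \d_p$. From $|\PR(F\leq b)-\PR(\hat F\leq b)|\leq \d_p$ for every $b$ and letting $b\uparrow a$, one obtains $\PR(F<a)\leq \PR(\hat F<a)+\d_p$, so the null $\PR(\hat F\geq a)\geq p$ yields $\PR(F\geq a)\geq p-\d_p$. The symmetric inequality applied at $a'$ turns the alternative $\PR(\hat F\geq a')<p$ into $\PR(F<a')>1-p-\d_p$. The requirement $\d_p<\min(p,1-p)$ is precisely what keeps both $p-\d_p$ and $1-p-\d_p$ in $(0,1)$, so that the logarithms appearing in $r_{\cal H}$ and $r_{\cal K}$ are well-defined.

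Next, I apply McDiarmid's inequality (Theorem \ref{conc_mcdiarmid}) to turn these two tail statements into bounds on $\E F$. Writing $r_t:=\frac{{\cal D}_F}{\sqrt 2}\sqrt{\log t^{-1}}$, if $\E F<a-r_{p-\d_p}$ then $a-\E F>r_{p-\d_p}$, and the strict upper-tail form listed after Theorem \ref{conc_mcdiarmid} gives $\PR(F\geq a)<p-\d_p$, contradicting the translated null; hence $\E F\geq a-r_{p-\d_p}$. Symmetrically, the lower-tail form yields $\E F<a'+r_{1-p-\d_p}$ under the translated alternative.

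Finally, I use that $\langle F\rangle_n$, viewed as a function of $(X_1,\dots,X_n)$, has McDiarmid diameter ${\cal D}_F/\sqrt n$, so McDiarmid gives $\PR(\langle F\rangle_n\leq \E F-s)\leq e^{-2ns^2/{\cal D}_F^2}$ and an analogous upper-tail bound. Unwrapping the definitions produces $b=a-{\cal D}_F r_{\cal H}=(a-r_{p-\d_p})-\frac{{\cal D}_F}{\sqrt{2n}}\sqrt{\log\d_1^{-1}}$, which under the null is at most $\E F-\frac{{\cal D}_F}{\sqrt{2n}}\sqrt{\log\d_1^{-1}}$, immediately giving $\theta_1=\PR(\langle F\rangle_n<b)\leq \d_1$. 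The gap condition $a-a'\geq {\cal D}_F(r_{\cal H}+r_{\cal K})$ is exactly what is needed to rearrange into $b\geq a'+r_{1-p-\d_p}+\frac{{\cal D}_F}{\sqrt{2n}}\sqrt{\log\d_2^{-1}}>\E F+\frac{{\cal D}_F}{\sqrt{2n}}\sqrt{\log\d_2^{-1}}$ under the alternative, yielding $\theta_2=\PR(\langle F\rangle_n\geq b)\leq \d_2$. The main obstacle I expect is the careful bookkeeping between strict and non-strict inequalities when passing CDF limits through the Kolmogorov bound and when selecting among the four tail variants stated after Theorem \ref{conc_mcdiarmid}; the remainder parallels the algebra already carried out in Theorem \ref{thm_main} and Corollary \ref{cor_val}.
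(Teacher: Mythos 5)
Your proposal is correct and follows essentially the same route as the paper: translate the hypotheses on $\hat F$ into hypotheses on $F$ at the shifted probability levels $p-\d_p$ and $p+\d_p$ via the Kolmogorov bound, and then run the in-sample validation machinery (the mean bound of Lemma \ref{lem_mean} followed by McDiarmid on the sample mean). The only difference is cosmetic — the paper compresses the second half to ``follows from Corollary \ref{cor_val},'' implicitly invoking the straightforward generalization of that corollary to different probability levels in the null and alternative, whereas you unwind those steps explicitly; your bookkeeping of the $r_{\cal H}$, $r_{\cal K}$, and $b$ algebra is correct.
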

When samples from $\hat{F}$ are available but more expensive than samples from $F$,
 we can use the sample data to estimate the Kolmogorov distance between $F$ and $\hat{F}$  and then
corporate the estimate in the test as discussed in Section
 \ref{sec_est} and afterwords.
The following estimate is efficient in the sense that it uses  the
 concentration of the Kolmogorov-Smirnov statistic of Dvoretzky, Kiefer and Wolfowitz \cite{DvKiWo} improved to have a tight constant by Massart \cite{Massart90}
 (see also \cite[Thm.~12.9]{DeGyLu97}) as follows:
Let $n$ i.i.d~samples be taken from $F$ and let $\PR_{n}$ denote its empirical measure and let $n' \leq n$ i.i.d~samples be taken from $\hat{F}$ and let
 $\PR_{n'}$ denote its empirical measure.
 Then the Dvoretzky, Kiefer and Wolfowitz Theorem states that
\[\PR^{n}\Bigl(\sup_{b \in \R}{\bigl|\PR_{n}(F \leq b)-\PR(F \leq b)\bigr|} > \e\Bigr) \leq 2e^{-2n\e^{2}}\] 
and
\[\PR^{n'}\Bigl(\sup_{b \in \R}{\bigl|\PR_{n'}(\hat{F} \leq b)-\PR(\hat{F} \leq b)\bigr|} > \e\Bigr) \leq 2e^{-2n'\e^{2}}.\]
Let us define 
\[d_{n,n'}(F,\hat{F})\Defi \sup_{b \in \R}{\bigl|\PR_{n}(F \leq b)-\PR_{n'}(\hat{F} \leq b) \bigr|}\]
as an estimator of the Kolmogorov distance $d(F,\hat{F})$ defined in (\ref{kolmogorov}). Then since 
\begin{eqnarray*}
 |d(F,\hat{F})-d_{n,n'}(F,\hat{F})|&=&\Bigl|\sup_{b \in \R}{\bigl|\PR(F \leq b)-\PR(\hat{F} \leq b)\bigr|}-\sup_{b \in \R}{\bigl|\PR_{n}(F \leq b)-\PR_{n'}(\hat{F} \leq b)\bigr|}\Bigr|\\
&\leq & \sup_{b \in \R}{\bigl|\PR_{n}(F \leq b)-\PR(F \leq b)\bigr|}+\sup_{b \in \R}{\bigl|
\PR_{n'}(\hat{F} \leq b)-\PR(\hat{F} \leq b)\bigr|}
\end{eqnarray*}
we use $n'\leq n$ to conclude by a simple union bound that
\begin{eqnarray*}
&&\PR^{n+n'}\bigl(  |d(F,\hat{F})-d_{n,n'}(F,\hat{F})|> \e\bigr)\\
&\leq & \PR^{n}\bigl( \sup_{b \in \R}{\bigl|\PR_{n}(F \leq b)-\PR(F \leq b)\bigr|}> \frac{\e}{2}\bigr)+
\PR^{n'}\bigl(\sup_{b \in 
\R}{\bigl|\PR_{n'}(\hat{F} \leq b)-\PR(\hat{F} \leq b)\bigr|}> \frac{\e}{2}\bigr)\\
&\leq & 4 e^{-\frac{1}{2}n'\e^{2}} \,.
\end{eqnarray*}
That is, we have
\begin{equation*}
 \PR^{n+n'}\bigl(  |d(F,\hat{F})-d_{n,n'}(F,\hat{F})|> \e\bigr)\leq 4 e^{-\frac{1}{2}n'\e^{2}}\, .
\end{equation*}
whose confidence form is
\begin{equation}
 \label{kol_est_conf}  \PR^{n+n'}\Biggl(  \Bigl|d(F,\hat{F})-d_{n,n'}(F,\hat{F})\Bigr|> 
\sqrt{\frac{2\ln{4}+2\ln{\d^{-1}}}{n'}}\Biggr)\leq \d\,  .
\end{equation}
This estimation inequality (\ref{kol_est_conf}) can be used, along the lines of Section \ref{sec_est} and afterword, to prove a version of Theorem 
\ref{thm_xval} where the estimate $d_{n,n'}(F,\hat{F})$ is used instead of the 
Kolmogorov distance $d(F,\hat{F})$.  Moreover, since the test and its performance depend logarithmically on this estimate, we should be able to obtain good tests where
$n'$ is {\em much smaller} than $n$. In particular, we should be able to obtain good tests {\em if} the Kolmogorov distance is small enough- instead of by assuming that it is so. However, for brevity, we do not complete this program here but move to the estimation of diameters in validation and certification tests. 
 
\section{Estimation of Diameters in Hypothesis Tests}
\label{sec_est}
The  validation and certification results, Corollaries \ref{cor_val} and \ref{cor_cert},
require the  value of the diameter ${\cal D}_{F}$  for validation
and ${\cal D}_{F-M}$ and
${\cal D}_{M}$ for certification. In principle the modeling and domain experts should have
much to say about bounding
these values.  However, sample data should also say something about  them.
With the eventual goal of combining expert knowledge about the relevant
diameters
with information from sample data, we now proceed to describe how sample data can be used
to estimate these diameters. This will be accomplished through an estimation procedure and
the introduction of "higher order" concentration parameters.
To that end, we now invert the concentration theorem to
its "confidence version" so that the diameters appear inside the probability statement.
This allows the comparison of the diameter with an estimable parameter and 
a mechanism for incorporating estimates of these parameters in the concentration theorems
and therefore into the definitions of tests and the analysis of their performance.

\subsection{Diameters in  Concentration Theorems}

By a simple function inversion,   McDiarmid's inequality 
can be written
\begin{equation}
\label{mcdiarmid_inv1}
 \PR\Bigl(F - \E F \geq f({\cal D}_{F},\d)\Bigr) \leq \d \,
\end{equation}
where $f(r,\d) :=  \frac{r }{\sqrt{2}} \sqrt{\log{\d^{-1}}}$.
This inversion was used in the
 proof of the main
Theorem  \ref{thm_main}. The following two lemmas reformulate those parts of Theorem
\ref{thm_main} which we will use  as  basic building blocks for developing validation and
certification tests with estimated diameters.
\begin{lemma}
\label{lem_inversion}
Let $0 < p < 1$ and $a,a' \in \R$ and consider the functions
$f_{H}:\R^{3} \rightarrow \R$ and $f_{K}:\R^{3} \rightarrow \R$
defined by
\[ f_{H}(r,r',\d)\Defi
\frac{r}{\sqrt{2}}\sqrt{\log{p^{-1}}}+
\frac{r'}{\sqrt{2}}\sqrt{\log{\d^{-1}}}-a\, ,\]
\[ f_{K}(r,r',\d)\Defi
\frac{r}{\sqrt{2}}\sqrt{\log{(1-p)^{-1}}}+
\frac{r'}{\sqrt{2}}\sqrt{\log{\d^{-1}}}+a'\, .\]
Then for all $0 < \d < 1$ and
 all   $F,F' \in {\cal U}$
which satisfy
 $\E F'=\E F$, we have 
\begin{eqnarray*}
\PR( F' \leq -f_{H}({\cal D}_{F},{\cal D}_{F'},\d)| F \in {\cal H}_{a})& \leq
\d\, , &   \\
\PR( F' \geq f_{K}({\cal D}_{F},{\cal D}_{F'},\d)| F \in {\cal K}_{a'}) & \leq \d\, . &
 \, 
\end{eqnarray*}
\end{lemma}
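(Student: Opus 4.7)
The plan is to derive both inequalities by the standard two-step concentration argument: first translate the hypothesis on $F$ into a bound on $\E F$ via McDiarmid's inequality applied to $F$, then use $\E F' = \E F$ together with McDiarmid's inequality applied to $F'$ to convert this into a tail bound on $F'$.

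For the first inequality, I would start by observing that $F \in \ca H_a$ means $\PR(F \geq a) \geq p$. Setting $r_p^{F} := \frac{\ca D_F}{\sqrt 2}\sqrt{\log p^{-1}}$, Theorem \ref{conc_mcdiarmid} (in its strict form $\PR(F - \E F > r_p^F) < p$) shows that if $a > \E F + r_p^F$, then the inclusion $\{F \geq a\} \subseteq \{F > \E F + r_p^F\}$ forces $\PR(F \geq a) < p$, contradicting the assumption. Thus $\E F \geq a - r_p^F$. Now apply McDiarmid to $-F'$ at level $\delta$, i.e. with $r'_\delta := \frac{\ca D_{F'}}{\sqrt 2}\sqrt{\log \delta^{-1}}$, to obtain $\PR(F' \leq \E F' - r'_\delta) \leq \delta$. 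Since $\E F' = \E F \geq a - r_p^F$, we have $\E F' - r'_\delta \geq a - r_p^F - r'_\delta = -f_H(\ca D_F, \ca D_{F'}, \delta)$, so the event $\{F' \leq -f_H\}$ is contained in $\{F' \leq \E F' - r'_\delta\}$, yielding the required $\delta$-bound.

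The second inequality is symmetric. If $F \in \ca K_{a'}$, then $\PR(F \geq a') < p$. Set $r_{1-p}^F := \frac{\ca D_F}{\sqrt 2}\sqrt{\log (1-p)^{-1}}$. Applying McDiarmid to $-F$ at level $1-p$ in strict form gives $\PR(F < \E F - r_{1-p}^F) < 1-p$, i.e. $\PR(F \geq \E F - r_{1-p}^F) > p$. If $a' \leq \E F - r_{1-p}^F$ then $\PR(F \geq a') \geq \PR(F \geq \E F - r_{1-p}^F) > p$, contradicting $F \in \ca K_{a'}$; hence $\E F < a' + r_{1-p}^F$. A second application of McDiarmid, this time to $F'$ upper-tail at level $\delta$, gives $\PR(F' \geq \E F' + r'_\delta) \leq \delta$, and since $\E F' + r'_\delta = \E F + r'_\delta < a' + r_{1-p}^F + r'_\delta = f_K(\ca D_F, \ca D_{F'}, \delta)$, monotonicity yields $\PR(F' \geq f_K) \leq \PR(F' \geq \E F' + r'_\delta) \leq \delta$.

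There is no real obstacle; the only subtlety worth flagging is the careful use of the strict versions of the McDiarmid tail inequalities listed after Theorem \ref{conc_mcdiarmid}, which are needed to rule out the boundary cases in the contradiction arguments that pin down $\E F$. Everything else is bookkeeping: the role of the hypothesis on $F$ is solely to locate $\E F$, and the equality $\E F' = \E F$ then lets the same concentration machinery control $F'$.
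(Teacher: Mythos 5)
Your proof is correct and follows the paper's intended argument: the paper does not spell out a separate proof for this lemma but remarks that it is a reformulation of Lemmas \ref{lem_mean} and \ref{lem_main} used in the proof of Theorem \ref{thm_main}, and your two-step argument (use McDiarmid on $F$ by contradiction to pin down $\E F$, then use $\E F' = \E F$ and McDiarmid on $F'$ to get the tail bound) is exactly that reformulation. The only cosmetic difference is that you use the strict tail inequality in both contradiction steps whereas Lemma \ref{lem_mean}'s proof uses the strict form in the first and the non-strict form in the second; both are valid.
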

  The following simple lemma shows how to use the
results of Lemma \ref{lem_inversion} to construct hypothesis tests with
controlled errors. It is formulated in terms of the primary variable $F$, a test
variable $F'$, and a vector $\vec{F}$ of auxiliary variables. 
\begin{lemma}
\label{test_basic}
Let ${\cal H}, {\cal K} \subset {\cal U}$ be null and alternative hypothesis spaces and let
 $ k \in \Bbb{N} $ and $0 <\d_{1},\d_{2}< 1$.
Consider functions $g_{K},g_{H}:{\cal U}^{k}
\rightarrow \R$  such that for  all  
$F,F' \in {\cal U}$ there
exists a vector $\vec{F}$ of auxiliary random variables $F^{j}, j=1,..,k$  such that
\[ \PR\Bigl( F' \leq
-g_{H}(\vec{F})\big| F \in
{\cal
H}\Bigr)\leq \d_{1} ,\]
\[ \PR\Bigl( F' \geq g_{K}(\vec{F})\big| F \in
{\cal
K}\Bigr)
\leq  \d_{2}.\]
We call any such  vector $\vec{F}$  admissible for $F, F'$.
Now  suppose $F, F' \in {\cal U}$ and consider any admissible
and vector $\vec{F}$. Consider the test $T$ of  $ F\in {\cal H}$ against $ F \in {\cal K}$ defined by
\begin{equation}
\label{test1}
T
        \Defi
\begin{cases}
1 \, , & F'  > -g_{H}(\vec{F})\\
0\,  ,&  F' \leq -g_{H}(\vec{F})
\end{cases}
\end{equation}
Then if $g_{H}(\vec{F})+g_{K}(\vec{F}) \leq 0$ we have
\[ \theta_{1}(T) \leq 
\d_{1} ,\]
\[ \theta_{2}(T) \leq 
 \d_{2}.\]
\end{lemma}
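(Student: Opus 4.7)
The proof plan is straightforward and essentially reduces to unpacking definitions and using the admissibility assumption for $\vec{F}$ once per error type.

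First I would handle the type~I error. By the definition of $T$ in (\ref{test1}), the event $\{T=0\}$ coincides with $\{F'\leq -g_H(\vec{F})\}$. Conditioning on $F\in{\cal H}$ and invoking the first of the two admissibility inequalities yields $\theta_1(T)=\PR(T=0\mid F\in{\cal H})=\PR(F'\leq -g_H(\vec{F})\mid F\in{\cal H})\leq \delta_1$, which is the desired type~I bound with no further work.

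Next I would handle the type~II error, and this is where the hypothesis $g_H(\vec{F})+g_K(\vec{F})\leq 0$ is used. Again by the definition of $T$, the event $\{T=1\}$ equals $\{F'>-g_H(\vec{F})\}$. The hypothesis rearranges to $-g_H(\vec{F})\geq g_K(\vec{F})$ (holding for the admissible $\vec{F}$ under consideration), so $\{F'>-g_H(\vec{F})\}\subseteq\{F'\geq g_K(\vec{F})\}$. Monotonicity of probability and the second admissibility inequality then give $\theta_2(T)=\PR(T=1\mid F\in{\cal K})\leq \PR(F'\geq g_K(\vec{F})\mid F\in{\cal K})\leq \delta_2$.

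The only subtlety worth flagging is interpreting the hypothesis $g_H(\vec{F})+g_K(\vec{F})\leq 0$, which a priori involves the random vector $\vec{F}$; to make the set inclusion above valid on the conditioning event, one reads the hypothesis as holding almost surely (or, in the intended applications of Lemma~\ref{lem_inversion}, as a deterministic inequality on the diameters $\mathcal D_F, \mathcal D_{F'}$ that appear in place of $\vec{F}$). There is no genuine obstacle here — the lemma is really a bookkeeping device that converts the two one-sided tail estimates supplied by the admissibility condition into a single hypothesis test, and the role of the condition $g_H+g_K\leq 0$ is exactly to guarantee that the acceptance region used for controlling type~I errors is contained in the rejection half-space used for controlling type~II errors.
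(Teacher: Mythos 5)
Your argument matches the paper's proof exactly: the type~I bound is immediate from the first admissibility inequality, and the type~II bound follows from the inclusion $\{F' > -g_H(\vec{F})\} \subseteq \{F' \geq g_K(\vec{F})\}$ forced by $g_H(\vec{F}) + g_K(\vec{F}) \leq 0$. The remark about reading the hypothesis as holding on the conditioning event is a sensible clarification but does not change the substance.
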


In general, concentration theorems can be used to establish results like  Lemma
\ref{lem_inversion} and then Lemma
 \ref{test_basic} can be used to establish a test and bound its errors. In particular, we see
how
the main theorem, Theorem \ref{thm_main}, with test point fixed at the right-hand side of
the interval, can then be obtained by a combined
application of
Lemmas  \ref{lem_inversion} and \ref{test_basic}: first apply Lemma
  \ref{lem_inversion} and then  apply Lemma \ref{test_basic} with ${\cal H}:={\cal H}_{a},{\cal
K}:={\cal H}_{a'},\vec{F}:= (F,F')$
and
\[ g_{H}(\vec{F})=g_{H}(F,F') \Defi f_{H}({\cal D}_{F},{\cal D}_{F'},\d_{1})\,,\]
\[ g_{K}(\vec{F})=g_{K}(F,F') \Defi f_{K}({\cal D}_{F},{\cal D}_{F'},\d_{2})\, .\]
However, what is important here is that we are now in a position define tests which  use estimates of
$ f_{H}({\cal D}_{F},{\cal D}_{F'},\d_{1})$ and $f_{K}({\cal D}_{F},{\cal D}_{F'},\d_{2})$.
Since we see no efficient way of estimating the McDiarmid diameter ${\cal D}_{F}$ of a
function $F$ but we do know something about the estimation of the usual diameter
  $ D_{G}$ of a function $G$
defined by
\[  D_{G}\Defi \sup_{x,x' \in {\cal X}
}{\bigl(G(x)-G(x')\bigr)},\]
we ask whether we can estimate the McDiarmid diameter by estimating the usual
diameters of a
set of auxiliary set of functions. 
To that end we first introduce
a relationship between the
McDiarmid diameter and the usual  diameters of a set of
auxiliary observables. These latter diameters we will then estimate using extreme value
estimators 
 in Section \ref{sec_conc}.
Now, ignoring for the moment the question of the attainment of suprema, if we  define
\[F^{j}(x_{j})
 \Defi
F(x_{1}^{*},..,x^{*}_{j-1},x_{j},x^{*}_{j+1},..,x^{*}_{m})\] 
where
\begin{eqnarray*}
&& (x_{1}^{*},..,x^{*}_{j-1},x^{*}_{j+1},..,x^{*}_{m})\\
& \Defi& 
\arg \max_{x_{1},..,x_{j-1},x_{j+1},..,x_{m}}\,\max_{x_{j}, x'_{j}} 
 {\Bigl(F(x_{1},..,x_{j-1},x_{j},x_{j+1},..,x_{m})-F(x_{1},..,x_{j-1},x'_{j},
x_{j+1},..,x_{m})\Bigr)}
\end{eqnarray*}
 it follows that
\[ {\cal D}^{2}_{F} = \sum_{j=1}^{m}{D^{2}_{F^{j}}}\, .\]
Namely  the McDiarmid diameter is a function of diameters.
However, this relation will only be of use to us if the
 functions $F^{j},j=1,..,k$ are observable, namely, they can be
evaluated. Now suppose we are in possession of a set $F^{j},j=1,..,k$ auxiliary observables
and let $\vec{D}$ denote the vector of their diameters. Suppose we also have
 functions $g_{H}$ and $g_{K}$ such that
\[  f_{H}({\cal D}_{F},{\cal D}_{F'},\d) \leq  g_{H}(\vec{D},\d)\,, \quad   0 < \d <1,\]
\[  f_{K}({\cal D}_{F},{\cal D}_{F'},\d) \leq
g_{K}(\vec{D},\d)\,, \quad  0 < \d <1 .\] 
Then since Lemma  \ref{lem_inversion}  asserts that for all $0 < \d < 1$ we have
\begin{eqnarray*}
\PR^{n}\Bigl( F' \leq -f_{H}\bigl(g(D),g'(D),\d\bigr)\big| {\cal H}\Bigr)& \leq
\d\, , &   \\
\PR^{n}\Bigl( F' \geq f_{K}\bigl(g(D),g'(D),\d\bigr)\big| {\cal K}\Bigr) & \leq \d , &
\end{eqnarray*}
it follows easily that for all $0 < \d < 1$  we have
\begin{eqnarray}
\label{eq_main}
\PR^{n}\Bigl( F' \leq -g_{H}\bigl(\vec{D},\d\bigr)\big| {\cal H}\Bigr)& \leq
\d\, , &   \\
\PR^{n}\Bigl( F' \geq g_{K}\bigl(\vec{D},\d\bigr)\big| {\cal K}\Bigr) & \leq \d , &
\end{eqnarray}
Consequently, we can apply Lemma \ref{test_basic} to obtain tests defined instead in terms of the
estimable functions $g_{H}\bigl(\vec{D},\d\bigr)$ and   $g_{K}\bigl(\vec{D},\d\bigr)$.
Most importantly, {\em the inequalities (\ref{eq_main}) 
 remain valid with the vector of diameters $\vec{D}$ replaced by the
vector of essential
diameters}.
 When the essential
diameter is {\em much} smaller than the given diameter,  this difference
can often offset the looseness
corresponding  to the error associated with estimating the essential
diameter using the empirical diameter.

Let us now give the first important example of  auxiliary observables. In
this case, they will be none other than the functions $F, F'$ themselves, but will
require the introduction of new functions, $c_{F},c_{F'}$ of $F$ and $F'$ which will have to be
approximately known.
To that end, define a coefficient of the separability $c_{F}$ of the function $F$
 with respect to the $m$ components of $ {\cal X}:=
\prod_{j=1}^{m}{\cal X}^{j}$ as follows:
\begin{definition}
\label{cf}
Let  ${\cal X}
:=
\prod_{j=1}^{m}{\cal X}^{j}$ be a product and consider a
function $F:{\cal X}
\rightarrow \R$, its diameter $ D_{F}$,
 and its McDiarmid diameter ${\cal D}_{F}.$
We define the coefficient of separability $c_{F}$ with respect to the
product
${\cal X}$
 to be
\[ c_{F} \Defi  \frac{  {\cal D}_{F}}{ D_{F}}.\]
\end{definition}
With this definition it is clear that if we define
$g_{H}(D_{F},D_{F'},\d):= f_{H}(c_{F}D_{F}, c_{F'}D_{F'},\d)$
and $g_{K}(D_{F},D_{F'},\d):= f_{K}(c_{F}D_{F}, c_{F'}D_{F'},\d)$, where we suppress the
dependency on $c_{F},c_{F'}$, we have
\[  f_{H}({\cal D}_{F},{\cal D}_{F'},\d) =  g_{H}(D_{F}, D_{F'},\d)\,, \quad   0 < \d <1,\]
\[  f_{K}({\cal D}_{F},{\cal D}_{F'},\d) =
g_{K}(D_{F},D_{F'},\d)\,, \quad  0 < \d <1 \] and therefore
\begin{eqnarray*}
\PR^{n}\Bigl( F' \leq -g_{H}\bigl(D_{F},D_{F'},\d\bigr)\big| {\cal H}\Bigr)& \leq
\d\, , &   \\
\PR^{n}\Bigl( F' \geq g_{K}\bigl(D_{F},D_{F'},\d\bigr)\big| {\cal K}\Bigr) & \leq \d , &
\end{eqnarray*}

Although we have now introduced a new function $c_{F}$ which will have to be known or well
bounded, this function has nice properties, which we now describe, which make assuming its value a weaker assumption
than assuming the value of a McDiarmid diameter.
Let us say that a map $\phi: \prod_{j=1}^{m}{\cal
X}^{j}\rightarrow
\prod_{j=1}^{m}{\cal
X}'^{j} $ is a diagonal bijection if it is a
product map  $\phi =\prod_{j=1}^{m}{\phi}_{j}$
such that ${\phi}_{j}:{\cal X}^{j}\rightarrow{\cal X}'^{j}$ is  a
bijection for all $j=1,..,m$. The following lemma shows that $ F \mapsto
c_{F}$
is a bounded
  invariant
under non-singular affine transformations $F \mapsto aF+b$ of the
function
$F$ and a
diagonal bijective invariant. 
\begin{lemma}
\label{c_basic}
The mapping $F \mapsto c_{F}$ is a diagonal bijective invariant. Moreover,
we have
\[c_{aF+b}=c_{F},\quad a, b \in \R, a \neq 0 \]
and
\[\frac{1}{\sqrt{m}} \leq c_{F} \leq \sqrt{m}.\]
\end{lemma}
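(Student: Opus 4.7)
The plan is to verify each of the three claims separately: diagonal bijective invariance, affine invariance, and the two-sided bound on $c_F$.

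For the diagonal bijective invariance, I would argue that the constraint in the definition of the partial diameter $D_j^F$ (that the pair $x,x'$ differ only in the $j$-th component) is preserved under any product map $\phi=\prod_j \phi_j$, and the constraint in the definition of $D_F$ (that $x,x'$ range over all of $\mathcal{X}$) is preserved because each $\phi_j$ is a bijection. Hence both $D_j^{F\circ\phi}=D_j^F$ and $D_{F\circ\phi}=D_F$, so $\mathcal{D}_{F\circ\phi}=\mathcal{D}_F$ and the ratio $c_F$ is unchanged.

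For the affine invariance, set $G=aF+b$. Then $G(x)-G(x')=a\bigl(F(x)-F(x')\bigr)$, and since the supremum over $F(x)-F(x')$ is invariant under swapping $x$ and $x'$, the factor $a$ becomes $|a|$ in absolute value after taking the sup. This gives $D_j^G=|a|D_j^F$ and $D_G=|a|D_F$, so $\mathcal{D}_G=|a|\mathcal{D}_F$, and dividing yields $c_G=c_F$.

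For the two-sided bound, the upper estimate $c_F\leq \sqrt{m}$ is immediate: the supremum defining $D_j^F$ is taken over a strict subclass of pairs $(x,x')$, so $D_j^F\leq D_F$ for every $j$, hence $\mathcal{D}_F^2=\sum_{j=1}^m (D_j^F)^2 \leq m D_F^2$. The lower estimate $c_F\geq 1/\sqrt{m}$ is the main step, and I would obtain it by a telescoping argument: for any $x,x'\in\mathcal{X}$,
\[
F(x)-F(x')=\sum_{j=1}^{m}\Bigl[F(x_1,\ldots,x_j,x'_{j+1},\ldots,x'_m)-F(x_1,\ldots,x_{j-1},x'_j,x'_{j+1},\ldots,x'_m)\Bigr],
\]
and each summand is bounded by $D_j^F$ since consecutive terms differ only in their $j$-th component. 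Then Cauchy--Schwarz gives $\sum_j D_j^F \leq \sqrt{m}\,\mathcal{D}_F$, and taking the supremum over $x,x'$ on the left yields $D_F\leq \sqrt{m}\,\mathcal{D}_F$, i.e.\ $c_F\geq 1/\sqrt{m}$. The only mild subtlety to watch is that the argument works for the signed partial diameters as defined (since the telescoping bound uses a one-sided estimate that still holds), so no absolute-value wrapping is needed; this is routine and I do not expect any real obstacle.
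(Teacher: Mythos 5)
Your proposal is correct and follows essentially the same route as the paper's proof: the upper bound $c_F\le\sqrt{m}$ from the pointwise inequality $D_j^F\le D_F$, and the lower bound $c_F\ge 1/\sqrt{m}$ from a telescoping decomposition of $F(x)-F(x')$ into $m$ single-coordinate changes followed by Cauchy--Schwarz. The only cosmetic differences are that you telescope starting from the last coordinate rather than the first, take the supremum directly rather than via an $\epsilon$-approximate maximizer, and spell out the two invariance claims that the paper merely asserts.
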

In Example \ref{ex_c} in the Appendix we describe the  attainment of the
the extreme
case $c_{F}=\frac{1}{\sqrt{m}}$ and  $c_{F}=\sqrt{m}$:  roughly, the
lower bound
is attained for functions which are separable in the $m$
components and the upper bound is obtained for a function related to the
Euclidean metric\footnote{
In personal communication, L. Gurvits has  demonstrated 
that  nontrivial lower bounds may not exist when ${\cal X}$ is not a product.
 For example it is easy to construct cases where the
 partial diameters are all zero and the
diameter is not.}.

Lemma \ref{test_basic} states that conditions such as
\begin{eqnarray}
\label{eq_main1}
\PR^{n}\Bigl( F' \leq -g_{H}(D)\big| {\cal
H}\Bigr)& \leq
\d_{1}\, , &   \\
\PR^{n}\Bigl( F' \geq g_{K}(D)\big| {\cal K}\Bigr) &
\leq \d_{2} . &
\end{eqnarray}
and
 $g_{H}(D)+g_{K}(D) \leq 0$ for functions of the {\em
essential} diameter
vector of auxiliary observables 
are sufficient to develop a good test. In the above analysis this was
accomplished by knowledge about the coefficients of separability $c_{F},c_{F'}$ which
allowed us to use the functions $F$ and $F'$ as their own auxiliary observables.
However relations such as (\ref{eq_main1}), where the inequalities are in terms of the usual
diameters,  can be obtained through other
concentration inequalities. For example, if we instead appeal to  the  Lipschitz concentration
  Theorem
\ref{conc_lip},
it is easy to obtain
inequalities (\ref{eq_main1}) from  Lemma \ref{lem_inversion} 
with ${\cal D}_{F}$ replaced
  by $2|F|{\cal D}_{X}$.   However, it is easy to show that
\[ {\cal D}_{F}\leq |F|{\cal D}_{{\cal X}}\]
indicating that   McDiarmid's Theorem
\ref{conc_mcdiarmid}
 provides a superior
concentration guarantee. On the other hand, since ${\cal D}_{{\cal X}}$ is a sum of
diameters, and
the supremum \[|F| \Defi \sup_{x\neq x'}{\frac{F(x)-F(x')}{d(x,x')}}\]
can be estimated by the empirical Lipschitz coefficient
\[ \hat{|F|} \Defi \sup_{X_{i}\neq
X_{i'},i,i'=1,..,n}{\frac{F(X_{i})-F(X_{i'})}{d(X_{i},X_{i'})}},\]
it follows that
  $|F|$ and
${\cal D}_{{\cal X}}$ can
be estimated from sample data using Corollary \ref{cor_diam2}  in Section 
\ref{sec_conc}.
However, this will require
 the random variable
$X$ to be  observable.
 Consequently, when ${\cal D}_{F}$ has no
readily apparent auxiliary observables (such as when no knowledge of
the coefficient of separability $c_{F}$ is available), and $X$ is observable,
 using  the  Lipschitz concentration Theorem \ref{conc_lip} may prove fruitful. 

\subsection{Concentration of Empirical Quantiles}
\label{sec_conc}
Since we will be concerned with the effects of estimating essential
diameters using sample data, we now
 describe results, of independent interest, concerning the concentration of
empirical quantiles about distributional quantiles and show how to use them to bound
empirical diameters with respect to essential diameters.
Let $X$ be a real random variable with probability measure $\PR$
and recall its distribution function
$\F(\xi) := \PR(X \leq \xi)$.
 For $0 < p < 1$ define the quantiles
$ \xi_{p}:= \F^{-1}(p) := \inf{\{\xi: \F(\xi) \geq p}\}$. We will
use
important properties of $\F$ and $\F^{-1}$ listed in Theorem \ref{lem_F}
in the
Appendix.
Moreover, let $X_{i},i=1,n$
be i.i.d samples from $X$. Let $\PR_{n}$ denote the corresponding
empirical
measure, denote by $\F_{n}$ its corresponding distribution function,
and
let $\hat{\xi}_{p}:=  \inf{\{\xi: \F_{n}(\xi) \geq p\}} $ denote the
empirical quantiles.
We will use the following improvement of a theorem of Serfling
\cite[Thm.~2.3.2]{serfling}.
\begin{theorem}
\label{thm_serfling}
Let $0 < q < 1 $ and suppose that $\xi >\xi_{q}.$ Then with $\d_{1} := \F(\xi)-q$ we have
\begin{enumerate}
\item
$\PR^{n}(\hat{\xi}_{q}>\xi) \leq e^{-2n\d_{1}^{2}}$
\item
$\PR^{n}(\hat{\xi}_{q}>\xi) \leq
e^{-\frac{n\d_{1}^{2}}{2(1-\F(\xi))+\frac{2}{3}\d_{1}}}$
\item
$\PR^{n}(\hat{\xi}_{q}>\xi) \leq e^{-\frac{n\d_{1}^{2}}{2\F(\xi)}}$
\end{enumerate}
On the other hand suppose that $\xi <\xi_{q}.$ Then with $\d_{2} :=
q-
\F(\xi)$ we have
\begin{enumerate}
\item
$\PR^{n}(\hat{\xi}_{q}<\xi) \leq e^{-2n\d_{2}^{2}}$
\item
$\PR^{n}(\hat{\xi}_{q}<\xi) \leq
e^{-\frac{n\d_{2}^{2}}{2\F(\xi)+\frac{2}{3}\d_{2}}}$
\item
$\PR^{n}(\hat{\xi}_{q}<\xi) \leq e^{-\frac{n\d_{2}^{2}}{2(1-\F(\xi))}}$
\end{enumerate}
\end{theorem}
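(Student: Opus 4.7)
The plan is to reduce each of the six tail probabilities to a one-sided large-deviation bound for the mean of i.i.d.\ Bernoulli variables, and then to extract the three different exponents from three different standard concentration inequalities. The key preliminary observation is that by right-continuity and monotonicity of $\F_n$ one has the definitional equivalence $\hat{\xi}_q \leq \xi \Leftrightarrow \F_n(\xi) \geq q$, so the event $\{\hat{\xi}_q > \xi\}$ in item~1 coincides with $\{\F_n(\xi) < q\}$. Setting $Y_i := \mathbf{1}[X_i \leq \xi]$, so $Y_1,\dots,Y_n$ are i.i.d.\ Bernoulli with mean $\F(\xi)$ and variance $\F(\xi)(1-\F(\xi))$, and writing $\delta_1 = \F(\xi) - q > 0$, the event becomes the lower-tail deviation $\bigl\{\tfrac{1}{n}\sum_i Y_i - \F(\xi) \leq -\delta_1\bigr\}$.

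Each bound in item~1 then follows by applying a different standard one-sided inequality to this deviation. For~(i) I would apply Hoeffding's inequality, using only $Y_i \in [0,1]$. For~(ii) I would apply Bernstein's inequality and bound the variance crudely by $\F(\xi)(1-\F(\xi)) \leq 1-\F(\xi)$, which loosens the Bernstein denominator from $2\F(\xi)(1-\F(\xi)) + \tfrac{2}{3}\delta_1$ to the stated $2(1-\F(\xi)) + \tfrac{2}{3}\delta_1$. For~(iii) I would invoke the one-sided multiplicative Chernoff bound for Bernoulli lower tails,
\[
\PR\Bigl(\tfrac{1}{n}\textstyle\sum_i Y_i \leq \F(\xi) - t\Bigr) \leq \exp\!\Bigl(-\tfrac{nt^{2}}{2\F(\xi)}\Bigr),
\]
applied with $t = \delta_1$; this is the standard Angluin--Valiant / Chernoff form, ultimately a consequence of the elementary inequality $\mathrm{KL}(p-t \| p) \geq t^{2}/(2p)$.

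Item~2 ($\xi < \xi_q$) will follow by the symmetric argument. The event $\{\hat{\xi}_q < \xi\}$ is contained in $\{\F_n(\xi) \geq q\}$ because $\sum_i \mathbf{1}[X_i < \xi] \leq \sum_i \mathbf{1}[X_i \leq \xi]$, and with $\delta_2 = q - \F(\xi) > 0$ this is the upper-tail event $\bigl\{\tfrac{1}{n}\sum_i Y_i - \F(\xi) \geq \delta_2\bigr\}$. Reapplying Hoeffding, Bernstein (now using the bound $\F(\xi)(1-\F(\xi)) \leq \F(\xi)$), and the upper-tail Bernoulli Chernoff bound (obtained from the lower-tail version by the substitution $Y_i \mapsto 1 - Y_i$, which interchanges the roles of $\F(\xi)$ and $1-\F(\xi)$) gives the three stated bounds. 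The one step requiring real care, and hence the main obstacle, is the strict-inequality containment $\{\hat{\xi}_q < \xi\} \subset \{\F_n(\xi) \geq q\}$, which is handled cleanly via the left-limit $\F_n(\xi^-) \leq \F_n(\xi)$; everything else is a routine deployment of well-known Binomial tail bounds, the only substantive decision being the choice of variance proxy for Bernstein in each direction.
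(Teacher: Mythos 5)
Your proof is correct and takes essentially the same route as the paper's: both reduce the six tail probabilities to one-sided deviations of a Binomial mean (you derive this from the duality $\hat{\xi}_q\le\xi\Leftrightarrow\F_n(\xi)\ge q$; the paper cites Serfling's Theorem~2.3.2) and then invoke Hoeffding, a Bernstein/Angluin--Valiant bound, and a Chernoff/relative-entropy bound, which is precisely what the paper obtains from parts (a), (b), (c) of McDiarmid's survey Theorem~2.3. The only cosmetic difference is that you keep $Y_i=I(X_i\le\xi)$ throughout and toggle between upper and lower tails, whereas the paper works with $V_i=I(X_i>\xi)$ in the first half and $W_i=I(X_i\le\xi)$ in the second, performing the change of variables $V_i'=1-V_i$ for item~(iii).
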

Theorem \ref{thm_serfling} now gives us a good tool to compare empirical
diameters with quantiles.
\begin{theorem}
\label{thm_serfling2}
Let $X$ be  a real random variable and let $X_{i},i=1,..,n$ i.i.d.~samples.
For $0 < p < 1$,  we have
\begin{enumerate}
\item
  Let
$D_{n}:= \sup_{i=1,..,n}{X_{i}}-\inf_{i=1,..,n}{X_{i}}$
denote the empirical range.
 Then  we have
\[\PR^{n}\Bigl(
 D_{n}<\xi_{p}-\xi_{1-p}\Bigr)
\leq
2e^{-\frac{1}{2}n(1-p)}\, .
\]
\item
Suppose $X$ is a non-negative random variable and let $S_{n} :=
\sup_{i=1,..,n}{X_{i}}$ denote
the empirical supremum.  Then we have
\[ \PR^{n}\Bigl(S_{n} <\xi_{p}\Bigr) \leq
e^{-\frac{1}{2}n(1-p)}\, .
\]
\end{enumerate}
\end{theorem}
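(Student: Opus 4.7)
The plan is to prove Part 2 first by a direct tail computation, then deduce Part 1 via a union bound. Although Theorem \ref{thm_serfling} is the natural tool to invoke, the cleanest route uses only the definition of the quantile together with an elementary scalar inequality; this also handles distributions with jumps uniformly.

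For Part 2 I would start by writing $\{S_n < \xi_p\} = \bigcap_{i=1}^n \{X_i < \xi_p\}$ and use independence to obtain $\PR^n(S_n < \xi_p) = \PR(X < \xi_p)^n$. By the definition $\xi_p = \inf\{\xi : \F(\xi) \geq p\}$ one has $\F(\xi) < p$ for every $\xi < \xi_p$, so taking the left limit gives $\PR(X < \xi_p) = \F(\xi_p^{-}) \leq p$. The standard concavity inequality $-\ln p \geq 1-p$ on $(0,1]$ then yields $p^n \leq e^{-n(1-p)} \leq e^{-n(1-p)/2}$, which is the stated bound (in fact with a better constant, so some slack is available).

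For Part 1 I would observe that whenever $S_n \geq \xi_p$ and $I_n \leq \xi_{1-p}$ hold simultaneously, $D_n = S_n - I_n \geq \xi_p - \xi_{1-p}$; equivalently,
\[ \{D_n < \xi_p - \xi_{1-p}\} \ \subseteq \ \{S_n < \xi_p\} \cup \{I_n > \xi_{1-p}\}. \]
A union bound reduces the proof to bounding the two events. The first is exactly Part 2. For the second, independence together with the quantile property $\F(\xi_{1-p}) \geq 1-p$ gives $\PR^n(I_n > \xi_{1-p}) = (1-\F(\xi_{1-p}))^n \leq p^n \leq e^{-n(1-p)/2}$ by the same elementary estimate, and summing produces the factor $2$. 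The only delicate point in either part is the careful use of one-sided limits at possible jumps of $\F$ (so that $\F(\xi_p)$ may strictly exceed $p$), which is why $\PR(X < \xi_p)$ rather than $\PR(X \leq \xi_p)$ is the right quantity to bound; beyond this bookkeeping there is no real obstacle. A slightly different route would apply Theorem \ref{thm_serfling} to $\hat{\xi}_1 = S_n$ with $q = 1$ and $\xi = \xi_p$ via the third bound there, but that version only recovers the claim at continuity points of $\F$, so the direct argument above is preferable.
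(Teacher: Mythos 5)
Your proof is correct, and it takes a genuinely different — and more economical — route than the paper's. The paper assembles the argument from Theorem \ref{thm_serfling} (the Bernstein-flavored refinement of Serfling's quantile-concentration result): it passes to a large empirical quantile $\hat\xi_{p'}$ with $p' > 1 - 1/n$ to force the identification $\hat\xi_{p'} = S_n$, introduces an $\epsilon$-shift $\xi_p - \epsilon$ so that the hypothesis $\xi < \xi_q$ of that theorem is satisfied strictly, invokes the third bound from Theorem \ref{thm_serfling}, and finally takes the double limit $p' \to 1$, $\epsilon \downarrow 0$ together with monotone-class continuity of $\PR^n$. Your argument bypasses all of this: the event $\{S_n < \xi_p\}$ is an intersection of i.i.d.\ events, so $\PR^n(S_n < \xi_p) = \F(\xi_p^-)^n$ \emph{exactly}, and Lemma \ref{lem_F}(ii) gives $\F(\xi_p^-) \leq p$ without any continuity assumption. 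The only remaining step is the scalar bound $\ln p \leq -(1-p)$, which already yields $p^n \leq e^{-n(1-p)}$ — a \emph{sharper} exponent than the stated $e^{-n(1-p)/2}$ — so the theorem's bound follows with slack to spare. The union-bound decomposition of Part 1 into the two one-sided events is the same in both approaches, and $\PR^n(I_n > \xi_{1-p}) = (1-\F(\xi_{1-p}))^n \leq p^n$ follows symmetrically from Lemma \ref{lem_F}(ii). Your argument therefore trades the heavier machinery of Theorem \ref{thm_serfling} for an exact computation, is shorter, needs no limiting device to handle jumps of $\F$, and delivers a better constant in the exponent (the $1/2$ in the paper's bound is an artifact of the Bernstein inequality inside Theorem \ref{thm_serfling}). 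One small inaccuracy in your closing remark: the route through Theorem \ref{thm_serfling} does not fail at discontinuities of $\F$ — the paper's $\epsilon$-shift and monotone limit recover the jump case — it just does so at the cost of extra work, which your direct computation avoids.
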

We now show how to use
Theorem \ref{thm_serfling2} to
bound the the empirical diameters in terms of essential diameters.
To that end, let
$X_{-} := ess\inf X$ and $X_{+}:= ess\sup X$. Then the
  essential diameter
is $ D := X_{+}-X_{-}$. We introduce a {\em tail function} quantifying the behavior of a random variable near its range limit.
\begin{definition}
\label{def_tail}
Let the tail function $\t^{X}$ corresponding to $X$ be defined by
\begin{eqnarray}
\label{tail}
\t^{X}(\epsilon) \Defi & \sup{t}& \quad \epsilon > 0\\
&t: \x_{1-t}-\xi_{t} \geq \frac{ D}{1+\epsilon},
\end{eqnarray}
\end{definition}
Roughly speaking the function
 $\t^{X}(\epsilon)$ is such that
 the set obtained  by eliminating the right and left tails
 of mass
$\t$ is at least $\frac{1}{1+\epsilon}$ as large as  the diameter.
Characterization
of tail behaviors lies as the heart of the theory of the limiting behavior of extreme
order statistics (see e.g.~Arov and Bobrov\cite{ArBo60},
Pickands \cite{Pickands75},
Barndorff-Neilsen \cite{Barndorff63}) and will no doubt be useful when
the diameters are unbounded,
but since we concern ourselves with the bounded case here,
 the tail function (\ref{tail})
appear sufficient to our needs.
The following proposition provides a lower bound for $\t(\epsilon)$ in terms of the
distribution function for $X$.
\begin{proposition}
\label{d_suff}
Let $X$ be a real random variable and
suppose that $X_{-} := ess\inf X$ and $X_{+}:= ess\sup X$ are finite.
Then in terms of the essential
diameter
 $ D := X_{+}-X_{-}$, we
 have \[\t^{X}(\e)  \geq \min{\Bigl(\F(X_{-}+ \frac{\e}{2(1+\e)}  D),
1-\F(X_{+}-\frac{\e}{2(1+\e)} D)\Bigr)}, \quad \e > 0.\]
\end{proposition}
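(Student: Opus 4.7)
The plan is to verify the lower bound pointwise on a one-parameter family of admissible $t$'s and then pass to the supremum that defines $\t^X(\e)$. Set $\alpha \Defi \frac{\e}{2(1+\e)} D$, so that $D - 2\alpha = \frac{D}{1+\e}$, and abbreviate the right-hand side of the claim as $t_* \Defi \min\bigl(\F(X_-+\alpha),\, 1 - \F(X_+-\alpha)\bigr)$. Because $\t^X(\e)$ is by definition the supremum of all $t$ with $\x_{1-t} - \x_t \geq \frac{D}{1+\e}$, it suffices to show that every $t$ in the open interval $(0,t_*)$ satisfies this inequality; taking the supremum then yields $\t^X(\e) \geq t_*$, and the case $t_* = 0$ is vacuous.

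Fix such a $t$. For the lower quantile, the chain $t < t_* \leq \F(X_-+\alpha)$ combined with the standard Galois identity $\x_p \leq x \Longleftrightarrow p \leq \F(x)$ (one of the elementary properties of $\F$ and $\F^{-1}$ listed in Theorem \ref{lem_F} of the Appendix) immediately gives $\x_t \leq X_- + \alpha$. For the upper quantile, the analogous chain $1 - t > 1 - t_* \geq \F(X_+-\alpha)$ shows that $X_+-\alpha$ is \emph{not} a member of the set $\{\x : \F(\x) \geq 1-t\}$ whose infimum defines $\x_{1-t}$; since $\F$ is non-decreasing no smaller point can belong to this set either, so $\x_{1-t} \geq X_+-\alpha$.

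Subtracting the two bounds yields
\[
\x_{1-t} - \x_t \ \geq\ (X_+ - \alpha) - (X_- + \alpha) \ =\ D - 2\alpha \ =\ \frac{D}{1+\e},
\]
which is exactly the condition for $t$ to be admissible in the definition of $\t^X(\e)$. Taking the supremum over $t \in (0,t_*)$ gives $\t^X(\e) \geq t_*$, completing the argument. The only subtle point is the boundary case $1 - t_* = \F(X_+-\alpha)$, in which $X_+-\alpha$ may itself lie in the infimum set and the clean bound $\x_{1-t_*} \geq X_+-\alpha$ can fail at $t = t_*$ by a flat section of $\F$; this is precisely why the argument is run with strict inequality $t < t_*$ and the conclusion recovered in the limit. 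I expect this to be the only real obstacle, and it is handled by the supremum step at the end rather than by any deeper property of $X$.
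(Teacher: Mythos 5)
Your proof is correct and takes essentially the same approach as the paper: both verify that every $t$ strictly below the claimed bound is admissible (using the same Galois-connection properties of $\F$ and $\F^{-1}$ from Lemma \ref{lem_F}) and then pass to the supremum/limit. The only cosmetic difference is that the paper parameterizes via $\d^* = \min(\d_-,\d_+) - \d'$ and sends $\d' \to 0$, while you work directly with $t < t_*$ and take the supremum, but the substance — including the careful handling of the boundary case via strict inequality — is identical.
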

As an elementary application, consider the case where
 the tails are not too thin. That is suppose
for some
$\kappa
>0$ we have $\F(X_{+}-x)
<1-\bigl(\frac{x}{ D}\bigr)^{\kappa}, 0 < x <  D$ and $\F(X_{-}+x)
\geq  \bigl(\frac{x}{ D}\bigr)^{\kappa}, 0 < x <  D.$
We conclude from Proposition  \ref{d_suff} that
\[\t^{X}(\epsilon) \geq \Bigl(\frac{\epsilon}{2(1+\epsilon)}\Bigr)^{\kappa}
\geq \Bigl(\frac{\epsilon}{4}\Bigr)^{\kappa} \] which for $\epsilon $ small is
$t^{X}(\epsilon) \gtrapprox \bigl(\frac{\epsilon}{2}\bigr)^{\kappa}\,. $
For non-negative random variables we proceed similarly to definition
(\ref{tail}) and define
\begin{eqnarray}
\label{tail_d}
\t^{X}_{+}(\epsilon) \Defi & \sup{\d}&\\
&\xi_{1-\d} \geq \frac{X_{+}}{1+\epsilon}& 
\end{eqnarray}
Similar arguments used in the proof of Proposition $\ref{d_suff}$ imply that
 $\t^{X}_{+}(\epsilon) \geq 1-\F( \frac{X_{+}}{1+\epsilon}).$
We are now in a position to compare the empirical diameter with the essential
diameter using the tail function $\t$.
\begin{corollary}
\label{cor_diam2}
Let  $X$ be a real random variable and  let $X_{i},i=1,..,n$ be
i.i.d.~samples from $X$.
Then
\begin{enumerate}
\item
Let
$ D_{n}:= \sup_{i=1,..,n}{X_{i}}-\inf_{i=1,..,n}{X_{i}}$
denote the empirical range and let $\t$ be define by
(\ref{tail}).
Then for all $\e > 0$ 
  we have
\[ \PR^{n}\Bigl( D_{n} <\frac{ D}{1+\epsilon}\Bigr) \leq
2e^{-\frac{n\t(\epsilon) }{2}}\, .
\]
\item
Suppose $X$ is a non-negative random variable and let $S_{n}
:=
\sup_{i=1,..,n}{X_{i}}$ denote
the empirical supremum and let  $\t_{+}$ be define by
(\ref{tail_d}).
Then for all $\e > 0$
 we have
\[ \PR^{n}\Bigl(S_{n} <\frac{X_{+}}{1+\epsilon}\Bigr) \leq
e^{-\frac{n\t_{+}(\epsilon)}{2}}\, .
\]
\end{enumerate}
\end{corollary}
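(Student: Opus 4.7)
The plan is to reduce both claims to Theorem \ref{thm_serfling2} by exploiting the defining properties of the tail functions $\t$ and $\t_+$, which translate the geometric statements about diameter fractions into statements about quantile gaps. The passage from quantile events to the desired events on $D_n$ and $S_n$ is a simple inclusion, and the supremum built into $\t,\t_+$ is handled by a limit on the exponent.

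For part (i), I would fix $\e>0$ and pick any $t$ strictly smaller than $\t^X(\e)$. By the definition (\ref{tail}) of the tail function, such a $t$ satisfies $\xi_{1-t}-\xi_t \geq \frac{D}{1+\e}$. Consequently the event $\{D_n < \frac{D}{1+\e}\}$ is contained in $\{D_n < \xi_{1-t}-\xi_t\}$. Applying part 1 of Theorem \ref{thm_serfling2} with $p := 1-t$ (so that $\xi_p - \xi_{1-p} = \xi_{1-t}-\xi_t$ and $1-p = t$) gives
\[
\PR^n\Bigl(D_n < \tfrac{D}{1+\e}\Bigr) \leq \PR^n\bigl(D_n < \xi_{1-t}-\xi_t\bigr) \leq 2e^{-nt/2}.
\]
Since this holds for every $t < \t^X(\e)$ and the right-hand side is continuous in $t$, I let $t \uparrow \t^X(\e)$ to obtain the stated bound $2e^{-n\t(\e)/2}$.

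Part (ii) proceeds completely analogously. Fix $\e>0$ and any $\d < \t^X_+(\e)$. By definition (\ref{tail_d}) we have $\xi_{1-\d} \geq \frac{X_+}{1+\e}$, so $\{S_n < \frac{X_+}{1+\e}\} \subseteq \{S_n < \xi_{1-\d}\}$. Part 2 of Theorem \ref{thm_serfling2} with $p := 1-\d$ yields $\PR^n(S_n < \xi_{1-\d}) \leq e^{-n\d/2}$, and letting $\d \uparrow \t_+(\e)$ completes the argument.

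There is no real obstacle here, since Theorem \ref{thm_serfling2} already does the probabilistic work. The only thing requiring mild care is the use of the definitions of $\t$ and $\t_+$ as suprema rather than maxima, which I handle by approximating from below and then letting $t$ (resp.\ $\d$) approach the supremum; the factor $2$ in part (i) versus its absence in part (ii) is inherited directly from Theorem \ref{thm_serfling2}.
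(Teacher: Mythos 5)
Your proof is correct and follows essentially the same route as the paper: both reduce the claim to Theorem \ref{thm_serfling2} by observing that any admissible $t$ in the definition of the tail function turns the event about a diameter fraction into an event about a quantile gap. The one small point worth making explicit is that ``$t<\t^X(\e)$ implies $\xi_{1-t}-\xi_t\geq\frac{D}{1+\e}$'' is not purely definitional but also uses that $t\mapsto\xi_{1-t}-\xi_t$ is non-increasing (since $\F^{-1}$ is non-decreasing), so the admissible set is an interval; your limiting argument $t\uparrow\t^X(\e)$ is in fact slightly more careful than the paper's, which plugs $p\leq\t(\e)$ directly into the inequality.
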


\subsection{Estimation in  Hypothesis Tests}
Lemma \ref{test_basic} and the discussion thereafter shows that when
 $f_{H}(D)+f_{K}(D) \leq 0$ (thus determining a relationship between the performance
thresholds $a,a'$ and the diameter $D$) the test of Lemma \ref{test_basic}
 of ${\cal H}$
 against ${\cal K}$ has type $I$ error not greater than $\d_{1}$ and
type II error not greater than $\d_{2}$.
 However, when good upper bounds on $D$ are not known and thus it is not
known if  $f_{H}(D)+f_{K}(D) \leq 0$,  these results may
be of limited value. To resolve this situation we use
sample data to estimate $D$ and use the estimate to test the condition
 $f_{H}(D)+f_{K}(D) \leq 0$.
To develop validation and certification tests along the lines
above will involve sequential tests. The type of test we consider we
call a 
 stop option hypothesis test:
\begin{definition}
\label{stoptest}
For $i=1,2$ consider a null hypothesis ${\cal H}_{i}$ and alternative ${\cal
K}_{i}$ of sets of real random variables,    
and a test $T_{i}$ of  ${\cal H}_{i}$ against ${\cal K}_{i}$.  
Define the reduced hypothesis spaces
\[ {\cal H}_{2\epsilon} \Defi ({\cal K}_{1} \cup {\cal
H}_{1})\cap{\cal H}_{2},\]
\[ {\cal K}_{2\epsilon} \Defi ({\cal K}_{1} \cup {\cal
H}_{1})\cap{\cal K}_{2}.\]
 We define the stop
option test $T_{1} \blacktriangleleft T_{2}$ which first implements $T_{1}$ and if
the outcome is acceptance, to use $T_{2}$ to test ${\cal H}_{2\epsilon}$ against
${\cal K}_{2\epsilon}$:
\[ T_{1} \blacktriangleleft T_{2}
\Defi 
\begin{cases}
0& T_{1}=0\\
(1,0) & T_{1}=1, T_{2}=0\\
(1,1) & T_{1}=1, T_{2}=1
\end{cases}
\]
All types of errors for the test $T_{1} \blacktriangleleft T_{2}$ can be controlled
by the following three  
 types of errors:
\[\theta_{1}(T_{1} \blacktriangleleft T_{2})\Defi \PR\bigl(T_{1}=0\big|{\cal H}_{1}\bigr)\]
\[\theta_{11}(T_{1} \blacktriangleleft T_{2})\Defi \PR\bigl(\{T_{1}=1, 
T_{2}=0\} \big|{\cal K}_{1} \cup ({\cal H}_{1}\cap{\cal H}_{2})\bigr)\]
\[\theta_{12}(T_{1} \blacktriangleleft T_{2})\Defi \PR\bigl(\{T_{1}=1,
T_{2}=1\} \big|{\cal K}_{1} \cup ({\cal H}_{1}\cap{\cal K}_{2})\bigr)\]

\end{definition}

Since \[{\cal K}_{1} \cup ({\cal H}_{1}\cap{\cal H}_{2})=({\cal K}_{1} \cup
{\cal
H}_{1})\cap{\cal H}_{2}= {\cal H}_{2\epsilon}\,  ,\]
\[{\cal K}_{1} \cup ({\cal H}_{1}\cap{\cal K}_{2})=({\cal K}_{1} \cup {\cal
H}_{1})\cap{\cal K}_{2}={\cal K}_{2\epsilon}\, , \]
it follows that 
\[ \theta_{11}(T_{1} \blacktriangleleft T_{2})=\PR\bigl(\{T_{1}=1,
T_{2}=0\} \big|{\cal H}_{2\epsilon})\bigr)\, ,\]
\[ \theta_{12}(T_{1} \blacktriangleleft T_{2})=\PR\bigl(\{T_{1}=1,
T_{2}=1\} \big|{\cal K}_{2\epsilon})\bigr)\, .\]
Consequently, the stop option test $T_{1} \blacktriangleleft T_{2}$
converts tests of ${\cal H}_{i}$ against ${\cal K }_{i},i=1,2$ into a
test of ${\cal H}_{1}$ against ${\cal K }_{1}$ and if accepted then
tests ${\cal H}_{2\epsilon}$ against ${\cal K }_{2\epsilon}$.
Since 
\begin{eqnarray*}
 \PR\bigl(T_{1}=1
\big|{\cal K}_{1} \bigr)&=&\PR\bigl(\{T_{1}=1,T_{2}=0\}
\big|{\cal K}_{1} \bigr)+\PR\bigl(\{T_{1}=1,T_{2}=1\}
\big|{\cal K}_{1} \bigr)\\
& \leq&  \PR\bigl(\{T_{1}=1,
T_{2}=0\} \big|{\cal K}_{1} \cup ({\cal H}_{1}\cap{\cal H}_{2})\bigr)+
\PR\bigl(\{T_{1}=1,
T_{2}=1\} \big|{\cal K}_{1} \cup ({\cal H}_{1}\cap{\cal K}_{2})\bigr)\\
&=& \theta_{11}+\theta_{12},
\end{eqnarray*} 
it follows that if all the errors $\theta_{1},\theta_{11},\theta_{12},$ are
small, then given ${\cal K}_{1}$ with
high probability we reject ${\cal H}_{1}$ and stop, and given ${\cal H}_{1}$ with high probability
we accept ${\cal H}_{1}$ and test well on the second test $T_{2}$ when applied to
the reduced null hypothesis ${\cal H}_{2\epsilon}$
against the reduced alternative ${\cal K}_{2\epsilon}.$
Finally, we note that we can also define the errors to be conditional errors
as in the conditional hypothesis testing framework analyzed in
\cite{Kiefer76}. In the applications of this paper, one can show that given 
${\cal H}_{1}$ the conditional errors are roughly the same as above,
and given ${\cal K}_{1}$, the conditional errors are not good. However,
in this case with high probability the first test will reject and stop.


We now proceed to implement the stop option test in the validation and certification
setting. To simplify the analysis
 in the following theorem, instead of first
testing  ${\cal H}_{1}=\{f_{H}(D)+f_{K}(D)
\leq 0\}$ against
 ${\cal K}_{1} =\{f_{H}(D)+f_{K}(D) > 0\}$, (where $D$ is the essential
diameter vector) we test
${\cal H}_{1}=\{f_{H}((1+\epsilon)D)+f_{K}((1+\epsilon)D)
\leq 0\}$ against
 ${\cal K}_{1} =\{f_{H}(D)+f_{K}(D) > 0\}.$ Also observe that this result is stated in terms of auxiliary variables which are sampled
 concomitantly with the sampling of the primary variable $F$. More general situations can be easily addressed.  

\begin{theorem}
\label{thm_test}
Let 
${\cal H}_{2}$ and 
${\cal K}_{2}$ denote null and alternate hypothesis spaces of real random
variables.
Let $X$ be a random variable with range ${\cal X}$
 and probability law $\PR$, and let
$F:{\cal X} \rightarrow \R$ and $F':{\cal X}^{n} \rightarrow \R$. Consider non-observable i.i.d.~samples $X_{i},i=1,..,n$ and observable $F'(
X_{1},..,X_{n})
$. In addition, let $k$ be a positive integer and
 let
 $F^{j}:{\cal X} \rightarrow \R,j=1,..,k$ be a collection of auxiliary observables with essential diameters
 $D_{F^{j}},j=1,..,k$. 
Let $ D:= \langle D_{F^{j}}\rangle_{j=1,..,k}$
denote the corresponding vector of essential diameters, 
\[   \hat{D}_{F^{j}} \Defi \sup_{i=1,..,n}
F^{j}(X_{i})
-\inf_{i=1,..,n} F^{j}(X_{i}) \]
denote the empirical diameters, and
 $ \hat{ D}:= \langle\hat{D}_{F^{j}}\rangle_{j=1,..,k}$
 the vector of empirical diameters.
  Let $f_{H}:\R^{k} \rightarrow \R$ and
$f_{K}:\R^{k} \rightarrow \R$ be non-decreasing functions such that
\begin{eqnarray*}
\PR^{n}( F' \leq -f_{H}(D)| {\cal H}_{2})& \leq & \d_{1},\\
 \PR^{n}( F' \geq f_{K}(D)| {\cal K}_{2}) & \leq & \d_{2}.
\end{eqnarray*}
 Let $\e > 0$ and define
 \begin{eqnarray*}
{\cal H}_{1}&=&\{f_{H}((1+\epsilon)D)+f_{K}((1+\epsilon)D)
\leq 0\},\\
{\cal K}_{1}& =&\{f_{H}(D)+f_{K}(D) > 0\}.
\end{eqnarray*}
and define  the test $T_{1}$ of  ${\cal H}_{1}$ against the alternative ${\cal
K}_{1}$ by
\begin{equation}
\label{teste}
T_{1}
\Defi
\begin{cases}
  1 \, , &
f_{H}((1+\epsilon)\hat{D})+f_{K}((1+\epsilon)\hat{D}) \leq 0\\
  0\,  ,&
f_{H}((1+\epsilon)\hat{D})+f_{K}((1+\epsilon)\hat{D}) > 0\, .
\end{cases}
\end{equation}
Moreover, define the
test $T_{2}$ of ${\cal H}_{2}$ against ${\cal K}_{2}$  by
\begin{equation}
\label{test2}
T_{2}
\Defi
\begin{cases}
  1 \, , & F'(X_{1},..,X_{n})  > -f_{H}((1+\epsilon)\hat{D})\\
  0\,  ,&  F'(X_{1},..,X_{n}) \leq -f_{H}((1+\epsilon)\hat{D})\, .
\end{cases}
\end{equation}
Finally, consider the stop option test $T_{1} \blacktriangleleft T_{2}$ and it associated
errors $\theta_{1},\theta_{11}$, and $\theta_{12}$, defined in
Definition \ref{stoptest}. 
Then  if we define
\[ \Delta \Defi   \sum_{j=1}^{k}\PR^{n}\Bigl( (1+\epsilon)\hat{
D}_{F^{j}} < D_{F^{j}}\Bigr),
\]
we have
\begin{eqnarray*}
\theta_{1}&= & 0 \\
\theta_{12}& \leq  &  \d_{1}+ \Delta \\
 \theta_{12} & \leq &  \d_{2}+ \Delta \, .
\end{eqnarray*}
Moreover, for $j=1,..,k$,
 let $\t^{j}$ denote the tail
functions of  $F^{j}$ defined in
(\ref{tail}), and  let 
\begin{equation}
\label{n_def}
 n_{j}^{\epsilon}(\d) \Defi
\frac{2\log{\frac{2k}{\d}}}{\t^{j}(\epsilon)}\, .
\end{equation}
Then  if  $  n \geq \max{\bigl( n_{j}^{\epsilon}(\d_{1}), 
n_{j}^{\epsilon}(\d_{2})\bigr)}, j=1,..,k $  we have
\begin{eqnarray*}
\theta_{1}&= & 0 \\
\theta_{11}& \leq  & 2\d_{1}\\
 \theta_{12} & \leq &  2\d_{2}\, .
\end{eqnarray*}

\end{theorem}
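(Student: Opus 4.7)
The plan is to control all three errors by introducing the ``good event''
\[ E \Defi \bigl\{(1+\e)\hat{D}_{F^j} \geq D_{F^j}\text{ for all }j=1,\dots,k\bigr\}, \]
whose failure probability is bounded by $\Delta$ through a union bound over the $k$ components. The other basic fact I would use repeatedly is that the empirical diameter of $n$ samples of any real random variable is at most its essential diameter almost surely, so $\hat D \leq D$ componentwise a.s. Combined with monotonicity of $f_H$ and $f_K$, this alone gives $\theta_1 = 0$: under ${\cal H}_1$, $f_H((1+\e)\hat D) + f_K((1+\e)\hat D) \leq f_H((1+\e)D) + f_K((1+\e)D) \leq 0$, so $T_1 \equiv 1$.

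For $\theta_{11}$ I would split the conditioning ${\cal H}_{2\e}$ into the two subcases ${\cal H}_1 \cap {\cal H}_2$ and ${\cal K}_1 \cap {\cal H}_2$ and bound each separately. On ${\cal H}_1 \cap {\cal H}_2$, the event $E$ together with monotonicity gives $-f_H((1+\e)\hat D) \leq -f_H(D)$, so $\{T_2 = 0\} \cap E \subseteq \{F' \leq -f_H(D)\}$, of probability at most $\d_1$ by hypothesis, while $E^c$ contributes at most $\Delta$. On ${\cal K}_1 \cap {\cal H}_2$, the event $E$ forces $f_H((1+\e)\hat D) + f_K((1+\e)\hat D) \geq f_H(D) + f_K(D) > 0$ and hence $T_1 = 0$, so $\{T_1 = 1\} \subseteq E^c$ contributes at most $\Delta$. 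Taking the worse of the two bounds yields $\theta_{11} \leq \d_1 + \Delta$.

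The bound on $\theta_{12}$ is analogous. On ${\cal K}_1 \cap {\cal K}_2$ the same ``$T_1 = 1$ implies $E^c$'' argument contributes $\Delta$. On ${\cal H}_1 \cap {\cal K}_2$, chaining the inequalities $f_H((1+\e)\hat D) \leq f_H((1+\e)D) \leq -f_K((1+\e)D) \leq -f_K(D)$ using $\hat D \leq D$ a.s., the ${\cal H}_1$ condition, and monotonicity of $f_K$ gives $-f_H((1+\e)\hat D) \geq f_K(D)$, so $\{T_2 = 1\} \subseteq \{F' \geq f_K(D)\}$ has probability at most $\d_2$. Combining gives $\theta_{12} \leq \d_2 + \Delta$. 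The ``moreover'' conclusion then follows from Corollary \ref{cor_diam2}, which bounds each summand of $\Delta$ by $2 e^{-n\t^j(\e)/2}$; the threshold $n_j^\e(\d)$ in (\ref{n_def}) is precisely calibrated so that this is at most $\d/k$. With $n \geq \max_j(n_j^\e(\d_1), n_j^\e(\d_2))$ a union bound over $j$ gives $\Delta \leq \d_1$ and $\Delta \leq \d_2$, turning the preceding bounds into $\theta_{11} \leq 2\d_1$ and $\theta_{12} \leq 2\d_2$.

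The main obstacle, to the extent there is one, is keeping the dual role of $E$ straight across the four subcases: in the ``correct prior'' subcases (${\cal H}_1$) the event $E$ is needed to pass from inequalities in $\hat D$ to the assumed inequalities on $F'$, whereas in the ``wrong prior'' subcases (${\cal K}_1$) the same event is precisely what forces $T_1 = 0$, so the entire error contribution comes from $E^c$. Once this asymmetry is recognized and the direction of monotonicity of $f_H$ and $f_K$ is tracked carefully in each chain of inequalities, the rest of the argument is routine.
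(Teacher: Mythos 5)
Your proof is correct and follows essentially the same route as the paper's. The one organizational difference is that you decompose the conditioning classes ${\cal H}_{2\e}$ and ${\cal K}_{2\e}$ by whether the underlying $F$ lies in ${\cal H}_1$ or ${\cal K}_1$ and introduce the good event $E = \{(1+\e)\hat D_{F^j}\geq D_{F^j}\ \forall j\}$ explicitly, whereas the paper packages the union bound once and for all in Lemma \ref{lem_sample} and then simply uses ${\cal H}_{2\e}\subseteq{\cal H}_2$, ${\cal K}_{2\e}\subseteq{\cal K}_2$ together with the fact that $T_1=1$ forces $-f_H((1+\e)\hat D)\geq f_K((1+\e)\hat D)$. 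Your decomposition is slightly more granular (and in the ${\cal H}_1\cap{\cal K}_2$ subcase your chaining $f_H((1+\e)\hat D)\leq f_H((1+\e)D)\leq -f_K((1+\e)D)\leq -f_K(D)$ avoids the $\Delta$ term altogether, using only $\hat D\leq D$ a.s.), but the key ingredients --- $\hat D\leq D$ a.s.\ for $\theta_1=0$, monotonicity of $f_H,f_K$, the union bound via Corollary \ref{cor_diam2}, and the calibration of $n_j^\e(\d)$ so each summand of $\Delta$ is at most $\d/k$ --- are identical.
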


The constraint $ n \geq n^{\epsilon}(\d)$ is logarithmic in $\d^{-1}$ with
multiplier $\frac{2}{\t(\epsilon)}$. If $\t(\epsilon)$ is not too small
 then this is a weak constraint. For fixed sample sizes, this
relation can be used to determine a lower bound on the size of
$\epsilon$ which can be used.

\section{Validation and Certification with Estimated Diameters}
We now present tests for validation and certification using estimated diameters. They show that if the coefficients of
separability $c$ are approximately known
 then the validation and
certification Corollaries \ref{cor_val} and \ref{cor_cert}, using the estimated
diameters, are still good.
\begin{corollary}[Validation with Estimated Diameters]
\label{cor_val_e} 
Let $a' \leq a $, $ 0 <  p < 1$ and let
\[{\cal H}_{2} = \{ \PR(F \geq a) \geq p\}\]
\[{\cal K}_{2} = \{ \PR(F \geq a') < p\}.\]
With the assumptions of Theorem \ref{thm_test}, 
 let
\[F'(X_{1},..,X_{n})
 \Defi \langle F \rangle_{n}\] be the sample mean.
Let $c \geq c_{F}$ be a known constant and consider   
 $F:{\cal X}\rightarrow \R$ with essential diameter $D$ as the 
 only auxiliary observable.
Let  $\t$ be the tail function
of $F$ defined in (\ref{tail}). Let $\e > 0$  and let  $n^{\epsilon}(\d)$ be defined in
(\ref{n_def}) with $k=1$.
Let $0 <
\d_{1},\d_{2} < 1$  
 and  define
\[ f_{H}(r)\Defi  \frac{c r}{\sqrt{2}}\sqrt{\log p^{-1}} +\frac{c
r}{\sqrt{2n}}\sqrt{\log
\d_{1}^{-1}}-a\]
\[f_{K}(r)\Defi  \frac{c r}{\sqrt{2}}\sqrt{\log(1-p)^{-1}} +\frac{c
r}{\sqrt{2n}}\sqrt{\log
\d_{2}^{-1}}+a' \, \] and let
\[   \hat{D}\Defi  \sup_{i=1,..,n}
F(X_{i})
-\inf_{i=1,..,n} F(X_{i})\] denote the empirical diameter.
Moreover,  consider the
 stop option test $T_{1} \blacktriangleleft T_{2}$ and its
associated
errors $\theta_{1},\theta_{11}$, and $\theta_{12}$, as  in
Theorem \ref{thm_test}. Then for $n \geq \max{( n^{\epsilon}(\d_{1}),
n^{\epsilon}(\d_{2}))}$ we have
\begin{eqnarray*}
\theta_{1}&= & 0 \\
\theta_{11}& \leq  & 2\d_{1}\\
 \theta_{12} & \leq &  2\d_{2}\, .
\end{eqnarray*}
  
\end{corollary}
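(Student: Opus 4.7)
The plan is to reduce the corollary to a single application of Theorem \ref{thm_test} with $k=1$ and the auxiliary observable $F^{1}:=F$. To do this I must exhibit functions $f_{H}$ and $f_{K}$ that are non-decreasing in the essential diameter and that satisfy the concentration premise required by Theorem \ref{thm_test}, and then check that the sample-size formula (\ref{n_def}) collapses to the one stated in the corollary.

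For the concentration step I would first compute the two McDiarmid diameters that feed into Lemma \ref{lem_inversion}. By Definition \ref{cf}, $\mathcal{D}_{F} = c_{F}\, D \leq c D$. For the sample mean $F' = \langle F \rangle_{n}$, viewed as a function of the $n\cdot m$ coordinates of $(X_{1},\ldots,X_{n})$, perturbing a single coordinate changes $F'$ by at most $D_{j}^{F}/n$, so summing squares gives $\mathcal{D}_{F'}^{2} = \mathcal{D}_{F}^{2}/n$, whence $\mathcal{D}_{F'} \leq c D/\sqrt{n}$. Substituting these upper bounds in place of the true McDiarmid diameters in Lemma \ref{lem_inversion}, and using that the lemma's $f_{H},f_{K}$ are non-decreasing in their first two arguments, yields
\[ \PR^{n}\bigl(F' \leq -f_{H}(D)\bigm|\mathcal{H}_{2}\bigr) \leq \d_{1}, \qquad \PR^{n}\bigl(F' \geq f_{K}(D)\bigm|\mathcal{K}_{2}\bigr) \leq \d_{2}, \]
where $f_{H},f_{K}$ are exactly the closed forms written in the corollary. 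Their monotonicity in $D$ is immediate from the formulas, so the hypotheses of Theorem \ref{thm_test} are verified with $k=1$.

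Finally I would invoke Theorem \ref{thm_test}. With $k=1$, formula (\ref{n_def}) collapses to $n^{\epsilon}(\d) = 2\log(2/\d)/\t(\epsilon)$, matching the corollary; the spaces $\mathcal{H}_{1},\mathcal{K}_{1}$, the tests $T_{1},T_{2}$, and the empirical diameter $\hat{D}$ are then exactly the objects constructed internally to Theorem \ref{thm_test} from the $f_{H},f_{K}$ just verified, so its conclusion $\theta_{1}=0$, $\theta_{11}\leq 2\d_{1}$, $\theta_{12}\leq 2\d_{2}$ transfers verbatim. The only step requiring any care is the monotone substitution of $cD$ and $cD/\sqrt{n}$ for the true McDiarmid diameters inside Lemma \ref{lem_inversion}; this is routine because the concentration thresholds there are linear in the diameter slot, so no real obstacle arises and the proof is essentially bookkeeping.
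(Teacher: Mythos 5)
Your proposal is correct and follows essentially the same route as the paper: bound $\mathcal{D}_{F}\le cD$ via Definition \ref{cf}, bound $\mathcal{D}_{F'}\le cD/\sqrt{n}$ via the partial-diameter computation from the proof of Corollary \ref{cor_val}, feed these into Lemma \ref{lem_inversion} using monotonicity to match the stated $f_H, f_K$, then invoke Theorem \ref{thm_test} with $k=1$. One small imprecision: after noting that each partial diameter of $F'$ is \emph{at most} $D_j^F/n$, the sum of squares gives $\mathcal{D}_{F'}^2 \le \mathcal{D}_F^2/n$ rather than equality, though you then correctly write the inequality, so the argument is unaffected.
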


For certification,
we now address the case where $F=F_{1}+F_{2}$, where $F_{1}$ can
represent the model and $F_{2}$ represent the difference between the
physical system and the model.
\begin{corollary}[Certification with Estimated Diameters]
\label{cor_cert_e}
Let $a' \leq a $, $ 0 <  p < 1$ and let
\[{\cal H}_{2} = \{ \PR(F \geq a) \geq p\}\]
\[{\cal K}_{2} = \{ \PR(F \geq a') < p\}.\]
With the assumptions of Theorem \ref{thm_test},
 let $F=F_{1}+F_{2}$. Let $n=n_{1} +n_{2}$ and in terms of the observation
$F_{1}(X_{i}), i=1,..,n_{1}$ and $F_{2}(X_{i}), i=n_{1}+1, n_{1}+n_{2}$ define
\[F'(X_{1},..,X_{n})
 \Defi
\langle F_{1}\rangle_{n_{1}}+
\langle F_{2}\rangle_{n_{2}}.\]
Let  $c_{1} \geq c_{F_{1}}$ and $c_{2} \geq c_{F_{2}}$ be known
constants and consider
 $F_{1}$ with essential diameter $D_{1}$ and $F_{2}$ with essential
diameter $D_{2}$ as 
  auxiliary observables, with diameter vector $D:= (D_{1},D_{2})$.
  Let  $\t^{j},j=1,2$ be the tail
functions, defined in (\ref{tail}),
of $F_{1}$ and $F_{2}$ respectively. Let   $\e > 0$ and  
  let  $n_{j}^{\epsilon}(\d),j=1,2$ be defined in
(\ref{n_def}) with $k=2$.
Let $\d_{1},\d_{2} < 1$
 and  define
\[ f_{H}(s_{1},s_{2})\Defi  (c_{1}s_{1}+c_{2}s_{2})\sqrt{\log p^{-1}} +
\sqrt{\Bigl(\frac{c^{2}_{1}s^{2}_{1}}{2n_{1}}
+\frac{c_{2}^{2}
s^{2}_{2}}{2n_{2}}\Bigr) 
\log \d_{1}^{-1}}-a\]
\[f_{K}(s_{1},s_{2})\Defi  (c_{1}s_{1}+c_{2}s_{2})\sqrt{\log (1-p)^{-1}} +
\sqrt{\Bigl(\frac{c^{2}_{1}s^{2}_{1}}{2n_{1}}
+\frac{c_{2}^{2}
s^{2}_{2}}{2n_{2}}\Bigr)
\log \d_{2}^{-1}}
+a' \, \] and let
\[   \hat{D_{1}}\Defi  \sup_{i=1,..,n_{1}}
F_{1}(X_{i})
-\inf_{i=1,..,n_{1}} F_{2}(X_{i})\]
\[   \hat{D_{2}}\Defi  \sup_{i=n_{1}+1,..,n_{1}+n_{2}}
F_{2}(X_{i})
-\inf_{i=n_{1}+1,..,n_{1}+n_{2}} F_{2}(X_{i})\]
 denote the empirical diameters with
empirical diameter vector $\hat{D}:= (\hat{D_{1}},\hat{D_{2}})$.
Moreover,  consider the
 stop option test $T_{1} \blacktriangleleft T_{2}$ and its
associated
errors $\theta_{1},\theta_{11}$, and $\theta_{12}$, as  in
Theorem \ref{thm_test}. Then for $n_{j} \geq \max{( n_{j}^{\epsilon}(\d_{1}),
n_{j}^{\epsilon}(\d_{2}))},j=1,2$ we have
\begin{eqnarray*}
\theta_{1}&= & 0 \\
\theta_{11}& \leq  & 2\d_{1}\\
 \theta_{12} & \leq &  2\d_{2}\, .
\end{eqnarray*}

\end{corollary}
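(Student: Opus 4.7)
The plan is to recognize Corollary \ref{cor_cert_e} as a direct specialization of Theorem \ref{thm_test} with $k=2$ and the two auxiliary observables $F^{1}:=F_{1}$, $F^{2}:=F_{2}$, once the concentration hypothesis of that theorem has been verified for the test statistic $F'=\langle F_{1}\rangle_{n_{1}}+\langle F_{2}\rangle_{n_{2}}$. Monotonicity of the $f_{H}$ and $f_{K}$ in the corollary as functions of $(s_{1},s_{2})$ is immediate from their formulas, so the only real task is to establish
\[\PR^{n}(F'\leq -f_{H}(D)\mid {\cal H}_{2})\leq\d_{1},\qquad \PR^{n}(F'\geq f_{K}(D)\mid {\cal K}_{2})\leq\d_{2},\]
with $D=(D_{1},D_{2})$; after that the sample-size condition and Corollary \ref{cor_diam2} handle the diameter-estimation error $\Delta$ from Theorem \ref{thm_test}.

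To verify the concentration hypothesis I would view $F'$ as a function on the product space ${\cal X}^{n_{1}+n_{2}}$. Because the samples indexing $\langle F_{1}\rangle_{n_{1}}$ and $\langle F_{2}\rangle_{n_{2}}$ are disjoint and i.i.d., the partial-diameter contributions decouple and a short computation yields
\[{\cal D}_{F'}^{2}\;=\;\frac{{\cal D}_{F_{1}}^{2}}{n_{1}}+\frac{{\cal D}_{F_{2}}^{2}}{n_{2}}\;\leq\;\frac{c_{1}^{2}D_{1}^{2}}{n_{1}}+\frac{c_{2}^{2}D_{2}^{2}}{n_{2}},\]
using $c_{F_{j}}\leq c_{j}$ from the definition of the coefficient of separability. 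The $\ell_{2}$-triangle inequality applied componentwise to the partial diameters of $F=F_{1}+F_{2}$ gives ${\cal D}_{F}\leq{\cal D}_{F_{1}}+{\cal D}_{F_{2}}\leq c_{1}D_{1}+c_{2}D_{2}$. Under ${\cal H}_{2}$, the inversion of McDiarmid's inequality (\ref{mcdiarmid_inv1}) that underlies Lemma \ref{lem_inversion} forces $\E F\geq a-\frac{{\cal D}_{F}}{\sqrt{2}}\sqrt{\log p^{-1}}$; since $\E F'=\E F_{1}+\E F_{2}=\E F$, applying McDiarmid once more to $F'$ with deviation $\frac{{\cal D}_{F'}}{\sqrt{2}}\sqrt{\log\d_{1}^{-1}}$ and substituting the upper bounds on ${\cal D}_{F}$ and ${\cal D}_{F'}$ produces exactly $\PR(F'\leq -f_{H}(D))\leq\d_{1}$. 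The ${\cal K}_{2}$ direction is symmetric (lower-tail McDiarmid on $F$, upper-tail on $F'$), yielding $\PR(F'\geq f_{K}(D))\leq\d_{2}$. This is essentially the two-variable version of Lemma \ref{lem_inversion}.

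With the hypothesis of Theorem \ref{thm_test} in hand, it remains to bound
\[\Delta\;=\;\sum_{j=1}^{2}\PR^{n}\bigl((1+\epsilon)\hat{D}_{F_{j}}<D_{F_{j}}\bigr).\]
By Corollary \ref{cor_diam2}, each summand is at most $2e^{-n_{j}\t^{j}(\epsilon)/2}$, and setting this bound equal to $\d/k$ with $k=2$ recovers precisely the definition (\ref{n_def}) of $n_{j}^{\epsilon}(\d)$. The hypothesis $n_{j}\geq\max(n_{j}^{\epsilon}(\d_{1}),n_{j}^{\epsilon}(\d_{2}))$ thus forces $\Delta\leq\min(\d_{1},\d_{2})$, whereupon Theorem \ref{thm_test} delivers $\theta_{1}=0$, $\theta_{11}\leq 2\d_{1}$, $\theta_{12}\leq 2\d_{2}$ as claimed. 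The main technical subtlety is the product-space McDiarmid computation for $F'$ across the disjoint samples of $F_{1}$ and $F_{2}$; everything else is bookkeeping against Theorem \ref{thm_test}.
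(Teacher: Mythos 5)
Your proposal is correct and takes essentially the same route as the paper: bound $\ca D_{F}\leq c_{1}D_{1}+c_{2}D_{2}$ and $\ca D_{F'}\leq\sqrt{c_{1}^{2}D_{1}^{2}/n_{1}+c_{2}^{2}D_{2}^{2}/n_{2}}$ via the product-space McDiarmid computation from Corollary \ref{cor_cert} and the separability coefficients, verify the concentration hypotheses of Theorem \ref{thm_test} for $f_{H}(D),f_{K}(D)$ (the paper phrases this as the inequalities $f'_{H}(\ca D_{F},\ca D_{F'},\d)\leq f_{H}(D)$ and $f'_{K}(\ca D_{F},\ca D_{F'},\d)\leq f_{K}(D)$ fed into Lemma \ref{lem_inversion}, which you re-derive inline), and then invoke Theorem \ref{thm_test} together with Corollary \ref{cor_diam2} and the sample-size condition to control $\Delta$.
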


\section{Proofs}

\begin{proofof}{Theorem \ref{thm_main}}

To begin we first prove the following simple lemma
 that quantifies how the mass constraints of the null  
 ${\cal H}^{{\cal D}}_{a}$ or the alternative
${\cal K}^{{\cal D}}_{a}$ imply a constraint on the value of $\E U$.

\begin{lemma}
\label{lem_mean}
Let $ 0 < p <1$, $a\in \R$ and ${\cal D}> 0$.
For $0 < t < 1$ 
  let $r_{t}  := \frac{{\cal D}}{\sqrt{2}} \sqrt{   \log{t^{-1}}}.$ 
Suppose $U \in {\cal H}^{{\cal D}}_{a}$, then $\E U \geq a-r_{p}$.
Suppose $U \in {\cal K}^{{\cal D}}_{a}$, then $\E U \leq a+r_{1-p}$.
\end{lemma}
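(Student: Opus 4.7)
The plan is to prove both inequalities by contradiction, directly applying the two-sided form of McDiarmid's inequality that is already recorded just after Theorem \ref{conc_mcdiarmid}. Since $U \in {\cal U}_{\cal D}$, we may write $U = F(X)$ with ${\cal D}_F \leq {\cal D}$, so the inequalities
\[
\PR(U - \E U > r_t) < t, \qquad \PR(U - \E U < -r_t) < t
\]
hold for every $t \in (0,1)$ with $r_t = \frac{{\cal D}}{\sqrt{2}}\sqrt{\log t^{-1}}$ (the looser diameter ${\cal D}$ only makes the $r_t$ used in the lemma larger than the $r_t$ for ${\cal D}_F$, which is the direction we need).

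For the first claim, suppose toward contradiction that $U \in {\cal H}^{\cal D}_a$ but $\E U < a - r_p$. Then $a - \E U > r_p$, so the event $\{U \geq a\}$ is contained in $\{U - \E U > r_p\}$, whence
\[
p \;\leq\; \PR(U \geq a) \;\leq\; \PR(U - \E U > r_p) \;<\; p,
\]
a contradiction. For the second claim, suppose $U \in {\cal K}^{\cal D}_a$ and $\E U > a + r_{1-p}$. Then $a - \E U < -r_{1-p}$, so $\{U < a\} \subseteq \{U - \E U < -r_{1-p}\}$, giving
\[
1-p \;<\; \PR(U < a) \;\leq\; \PR(U - \E U < -r_{1-p}) \;<\; 1-p,
\]
again a contradiction.

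There is no real obstacle here: the statement is essentially a restatement of McDiarmid's inequality in ``confidence form,'' converting a tail bound at level $p$ (respectively $1-p$) into a bound on how far $\E U$ can lie from the threshold $a$. The only thing to be slightly careful about is using the strict versions $\PR(\cdot > r_t) < t$ and $\PR(\cdot < -r_t) < t$ rather than the non-strict ones, so that the contradiction works on the boundary case where $a - \E U$ equals $r_p$ up to a limit; equivalently, one may just replace the hypothesis $\E U < a - r_p$ by ``$\E U \leq a - r_p - \varepsilon$'' for every $\varepsilon > 0$ and let $\varepsilon \downarrow 0$. Either way the argument is a two-line contradiction from the existing tail bounds.
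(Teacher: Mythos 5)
Your proof is correct and takes essentially the same approach as the paper: both arguments proceed by contradiction using the confidence-form tail bounds of McDiarmid's inequality listed right after Theorem \ref{conc_mcdiarmid}. In the second part you work directly with $\PR(U<a)$ while the paper works with $\PR(U\geq a)$ and then passes to the complement, but these are the same computation.
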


\begin{proof}
Let $U \in {\cal H}^{D}_{a}$ and suppose to the contrary that $\E U
<a-r_{p}$. Then we have
\[ p \leq \PR(U \geq a) \leq \PR(U > \E U +r_{p}) =
 \PR(U -  \E U >
r_{p})  < p \]
which is a contradiction, thus establishing the first assertion.
Now let $U \in {\cal K}^{D}_{a}$ and suppose to the contrary that
$ \E U  >
a+r_{1-p}$. Then
\[ p >  \PR(U \geq a) \geq \PR(U > \E U -r_{1-p}) =  \PR(U
-  \E U > -
r_{1-p})=1-\PR(U
-  \E U \leq -
r_{1-p})  \geq p \]
which is a contradiction, thus establishing the second assertion.
\end{proof}

The confidence version of the following result essentially completes the proof of Theorem \ref{thm_main}.
\begin{lemma}
\label{lem_main}
  With the assumptions of Theorem \ref{thm_main} 
let $a, a' \in \R$ satisfy $a-a'\geq r_{p} +r_{1-p}$ so that the interval $[
a'+r_{1-p},a-r_{p}]$ is nonempty. Let  $b\in 
 [ a'+r_{1-p},a-r_{p}]$ 
and consider the test $T$ of  ${\cal H}^{\cal D}_{a}$  against ${\cal K}^{\cal D}_{a'}$ 
defined by
\[T \Defi
\begin{cases} 1, & F'(y) \geq b\\
              0, & F'(y) <b \, .
\end{cases}
\]
Then we have 
\[  \theta_{1}(U)  <
e^{-\frac{2( (a
-r_{p})-b)^{2}}{{\cal D}'^{2}}},\]
\[\theta_{2}(U) 
\leq
e^{-\frac{2( b-(a'
+r_{1-p}))^{2}}{{\cal D}'^{2}}}.\]
\end{lemma}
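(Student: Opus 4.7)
The plan is to reduce both error bounds to direct applications of McDiarmid's inequality to $F'$, after using Lemma \ref{lem_mean} to pin down the common mean $\E F' = \E U$ on the null and alternative sides.

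For the Type I error, assume $U \in \mathcal{H}^{\mathcal{D}}_a$. Lemma \ref{lem_mean} gives $\E F' = \E U \geq a - r_p$, so the hypothesis $b \leq a - r_p$ implies $b - \E F' \leq -((a-r_p) - b)$, and the event $\{F'(y) < b\}$ is contained in $\{F' - \E F' < -((a-r_p) - b)\}$. Writing $r := (a-r_p) - b \geq 0$, the strict form of McDiarmid's inequality listed immediately after Theorem \ref{conc_mcdiarmid}, applied with diameter $\mathcal{D}_{F'} \leq \mathcal{D}'$, yields
\[
\theta_1(U) \leq \PR(F' - \E F' < -r) < e^{-2r^2/\mathcal{D}'^2} = e^{-2((a-r_p)-b)^2/\mathcal{D}'^2},
\]
which is exactly the claimed strict bound.

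For the Type II error, assume $U \in \mathcal{K}^{\mathcal{D}}_{a'}$. Lemma \ref{lem_mean} gives $\E F' \leq a' + r_{1-p}$, so $\{F'(y) \geq b\} \subseteq \{F' - \E F' \geq b - (a'+r_{1-p})\}$. Writing $r' := b - (a'+r_{1-p}) \geq 0$, the one-sided (non-strict) McDiarmid inequality gives
\[
\theta_2(U) \leq \PR(F' - \E F' \geq r') \leq e^{-2r'^2/\mathcal{D}'^2} = e^{-2(b-(a'+r_{1-p}))^2/\mathcal{D}'^2}.
\]

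No step is genuinely difficult; the whole argument is a two-line reduction once Lemma \ref{lem_mean} is in hand. The only care needed is (i) orienting the mean bounds so that the exponents come out non-negative, and (ii) invoking the strict form of McDiarmid in the Type I case to match the strict inequality appearing in the conclusion of the lemma, while the plain form suffices for Type II.
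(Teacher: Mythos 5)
Your proof is correct and follows exactly the same route as the paper: use Lemma \ref{lem_mean} to bound $\E F' = \E U$ from below on $\mathcal{H}^{\mathcal{D}}_a$ and from above on $\mathcal{K}^{\mathcal{D}}_{a'}$, then apply McDiarmid's inequality to $F'$ (strict form for Type I, non-strict for Type II) using $\mathcal{D}_{F'} \leq \mathcal{D}'$. The only cosmetic difference is that you introduce the shorthand $r, r'$ where the paper carries the full expressions through, and you state the monotonicity step $\{F' < b\} \subseteq \{F' - \E F' < -r\}$ a little more explicitly.
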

\begin{proof}
Suppose $U \in {\cal H}^{D}_{a}$. Then by Lemma \ref{lem_mean}
we have
\[ b  \leq b-(a -
r_{p})+\E U =  b-(a -
r_{p})+\E F' \]
so that by Theorem \ref{conc_mcdiarmid} applied to $F'$, we conclude that
the
 Type I error satisfies
\[ \theta_{1}(U) =\PR(F'(Y) < b)
\leq
\PR(F'-\E F' < b-(a -r_{p})) <
e^{-\frac{2( (a
-r_{p})-b)^{2}}{D_{F'}^{2}}} \leq e^{-\frac{2( (a
-r_{p})-b)^{2}}{D'^{2}}} .\]
Similarly, if we suppose  $U \in {\cal K}^{D}_{a'}$, by Lemma \ref{lem_mean}
we have
\[ b  \geq b-(a' +
r_{1-p})+\E U=b-(a' + r_{1-p})+\E F' .\] Consequently,
  the Type II error satisfies
\[\theta_{2}(U)  =\PR(F'(Y) \geq b)
\leq
\PR(F'-\E F' \geq b-(a'+ r_{1-p})) \leq
e^{-\frac{2( b-(a'
+r_{1-p}))^{2}}{D_{F'}^{2}}} \leq e^{-\frac{2( b-(a'
+r_{1-p}))^{2}}{D'^{2}}}.\]
\end{proof}
We are now ready to complete the proof of Theorem \ref{thm_main}.
It is easy to show 
 that $a$, $a'$ and $b$ satisfy the assumptions of Lemma
\ref{lem_main}. Since it follows from the assumptions that
$ (a
-r_{p})-b \geq r'_{\d}$ and
$b-(a'
+r_{1-p}) \geq r'_{\d'}$
the assertion follows from Lemma \ref{lem_main}.
\end{proofof}

\begin{proofof}{Corollary \ref{cor_val}}
It is not hard to see that $\E F' =\E F =\E U$.
Let the vector variable $X$ have $J$ components  and index
 the $J^{n}$ components of
$\prod_{i=1}^{n}{{\cal X}_{i}}$ by the map
 $i,j \mapsto k_{ij}=J(i-1)+j,  j=1,..,J, i=1,..,n$.
First observe that if we define
$F':\prod_{i=1}^{n}{{\cal X}_{i}}\rightarrow \R$ by
$F'(\prod_{i=1}^{n}{X_{i}}):= \langle F\rangle_{n}$
 we have $\E F' =\E F =\E U$.
  Moreover, for all $j=1,..,J, i=1,..,n$  we have
$D_{k_{ij}}^{F'} \leq \frac{1}{n} D^{F}_{j}, j=1,..,J , i=1,..,n$
and therefore
\[ {\cal D}_{F'}^{2} =\sum_{j=1,..,J  , i=1,..,n
}{(D^{F'}_{k_{ij}})^{2}}
\leq \frac{1}{n^{2}}\sum_{j=1,..,J  , i=1,..,n
}{(D^{F}_{j})^{2}}
=  \frac{1}{n}\sum_{j=1,..,J
}{(D^{F}_{j})^{2}}
=  \frac{1}{n}  
{\cal D}_{F}^{2}
\]
and so we conclude that
 ${\cal D}_{F'}\leq \frac{ {\cal D}_{F}}{\sqrt{n}} \leq \frac{ {\cal D}}{\sqrt{n}} . $
The assertion follows from  Theorem \ref{thm_main}.
\end{proofof}

\begin{proofof}{Corollary \ref{cor_cert}}
As in the proof of Corollary \ref{cor_val} index
 the $J^{n_{1}+n_{2}}$ components of
$\prod_{i=1}^{n_{1}+n_{2}}{{\cal X}_{i}}$ by the map  $i,j \mapsto k_{ij}=J(i-1)+j,
j=1,..,J, i=1,..,n_{1}+n_{2}$.
Since
$D_{j}^{F} \leq D_{j}^{F_{1}}+D_{j}^{F_{2}},j =1,..,J$ it easily follows the triangle
inequality in $\ell_{2}$ and the assumptions that
 ${\cal D}_{F} \leq {\cal D}_{F_{1}}+{\cal D}_{F_{2}} \leq
{\cal D}_{1}+{\cal D}_{2} \leq {\cal D}$.
Consequently, $U \in {\cal U}_{D}$ and we can apply Theorem
\ref{thm_main}.
 To that end observe that 
$F':\prod_{i=1}^{n_{1}+n_{2}}{{\cal X}_{i}}\rightarrow \R$ defined by
$F'(\prod_{i=1}^{n_{1}+n_{2}}{X_{i}}):= \langle F_{1}\rangle_{n_{1}}+ \langle
F_{2}\rangle_{n_{2}}$ satisfies $\E F' =\E F_{1}+\E F_{2} =\E F =\E U$. Moreover,
it follows that $D_{k_{ij}}^{F'} \leq \frac{1}{n_{1}} D^{F_{1}}_{j}$
 if $1 \leq i  \leq n_{1}$
and 
$D_{k_{ij}}^{F'} \leq \frac{1}{n_{2}}D^{F_{2}}_{j}$ if $n_{1}+1 \leq i  \leq
n_{1}+n_{2}$.
Consequently we conclude that
\begin{eqnarray*}
 {\cal D}_{F'}^{2}& =&\sum_{j=1,..,J  , 1 \leq i  \leq n_{1} 
}{(D^{F'}_{k_{ij}})^{2}}+\sum_{j=1,..,J  , n_{1}+1 \leq i  \leq n_{1}+n_{2} 
}{(D^{F'}_{k_{ij}})^{2}}\\
&\leq& \frac{1}{n_{1}^{2}}\sum_{j=1,..,J  , 1 \leq i  \leq n_{1} 
}{(D^{F_{1}}_{j})^{2}}+ \frac{1}{n_{2}^{2}}\sum_{j=1,..,J  , n_{1}+1
\leq i  \leq n_{1}+n_{2}
}{(D^{F_{2}}_{j})^{2}}
.
\end{eqnarray*}
Therefore, since 
\[ \sum_{j=1,..,J  , 1 \leq i  \leq n_{1}
}{(D^{F_{1}}_{j})^{2}} =n_{1}{\cal D}_{F_{1}}^{2} \leq n_{1}{\cal D}^{2}_{1}\]
and
\[\sum_{j=1,..,J  , n_{1}+1\leq i  \leq n_{1}+n_{2}
}{(D^{F_{2}}_{j})^{2}} =n_{2}{\cal D}_{F_{2}}^{2} \leq n_{2}{\cal D}^{2}_{2}\]
we conclude that
\[  {\cal D}_{F'} \leq
\sqrt{\frac{{\cal D}_{1}^{2}}{n_{1}}+\frac{{\cal D}_{2}^{2}}{n_{2}}}\, .
\]
Consequently
 Theorem \ref{thm_main} implies the first assertion.
For the second observe that ${\cal D}_{1} \leq {\cal D}_{2}$ implies that
${\cal D}_{F} \leq {\cal D}_{1}+{\cal D}_{2} \leq 2{\cal D}_{1}$ which implies $U \in {\cal
U}_{D}$. Moreover,
 setting $n_{2} \geq
\frac{{\cal D}_{2}}{{\cal D}_{1}} n_{1}$ implies that
$ {\cal D}_{F'} \leq  \sqrt{\frac{{\cal D}_{1}^{2}}{n_{1}}+\frac{{\cal D}_{2}^{2}}{n_{2}}}
 \leq  {\cal D}_{1}\sqrt{\frac{2}{n_{1}}}.
$ 
 Theorem \ref{thm_main}
then implies the assertion.
\end{proofof}

\begin{proofof}{Lemma \ref{thm_xval}}
The assumption $d(F,\hat{F}) \leq \d_{p}$ implies that if $\PR(\hat{F} \geq a) \geq p$ that
$\PR(F \geq a) \geq p-\d_{p}$ and if $\PR(\hat{F} \geq a') < p$ that
$\PR(F \geq a') < p+\d_{p}$. The result then follows from Corollary \ref{cor_val}.
\end{proofof}

\begin{proofof}{Lemma \ref{test_basic}}
The first assertion is trivial and since
$g_{H}(\vec{F})+g_{K}(\vec{F}) \leq 0$ the second
follows from
\[ \theta_{2} = \PR( F' > -g_{H}(\vec{F})|{\cal K})\leq \PR( F' \geq
g_{K}(\vec{F})|
{\cal K}) \leq \d_{2}. \]
\end{proofof}

\begin{proofof}{Lemma \ref{c_basic}}
The first assertion follows from the fact that both $F \mapsto {\cal D}_{F}$ and
  $ F \mapsto
 D_{F}$ are diagonal bijective invariants.
The second assertion follows from the fact that both $F \mapsto
{\cal D}_{F}$ and
  $ F \mapsto
D_{F}$  are invariant under $F \mapsto F+b, b \in \R$ and both
transform through scaling $F \mapsto aF$ by ${\cal D}_{aF} =|a|{\cal D}_{F}$
  For the second,
let $x,x'$ approximately achieve the supremum in $  D_{F}$ to accuracy
$\epsilon .$ That is $F(x)-F(x') \geq   D_{F}-\epsilon.$  Then using the
product nature of ${\cal X}$ we find that
\begin{eqnarray*}
F(x)-F(x')& =&F(x_{1},..,x_{m})-F(x'_{1},..,x'_{m})\\
&=&
F(x_{1},x_{2},..,x_{m})-F(x'_{1},x_{2},..,x_{m})+F(x'_{1},x_{2},..,x_{m})-F(x'_{1},x'_{2},..,x_{m})
+......\\
& \leq& \sum_{i=1}^{m}D_{j}^{F}
\end{eqnarray*}
 and so conclude that
$D_{F} \leq \sum_{j=1}^{m}D_{j}^{F}+\epsilon$. Since $\epsilon$ is
arbitrary we then conclude that
\[  D_{F} \leq \sum_{j=1}^{m}D_{j}^{F} \leq
\sqrt{m}\sqrt{\sum_{j=1}^{m}\bigl(D_{j}^{F}\bigr)^{2}}=\sqrt{m}{\cal D}_{F}\] from
which we conclude
that
\[ c_{F}=  \frac{  {\cal D}_{F}}{ D_{F}} \geq \frac{1}{\sqrt{m}} .\]
On the other hand,
since $D_{F} \geq D_{j}^{F}, j=1,..,m$ we obtain
\[  D_{F}^{2} \geq \frac{1}{m}\sum_{j=1}^{m}{\bigl(
D_{j}^{F}\bigr)^{2}}=\frac{1}{m}{\cal D}_{F}^{2}
\]
and conclude that
$ c_{F}= \frac{  {\cal D}_{F}}{ D_{F}} \leq \sqrt{m}.$

\end{proofof}

\begin{proofof}{Theorem \ref{thm_serfling}}
Consider the first set of assertions.
According to the proof of \cite[Thm.~2.3.2]{serfling} we have
\[\PR^{n}(\hat{\xi}_{q}>\xi)=\PR^{n}\Bigl(\sum_{i=1}^{n}V_{i}-\sum_{i=1}^{n}\E V_{i} >
n\d_{1}\Bigr)\]
where
$\d_{1} := \F(\xi)-q$ and $V_{i}:= I(X_{i} > \xi)$. Consequently we obtain
$\E V_{i}=1-\F(\xi), i=1,..,n.$ Applying Hoeffding's inequality
\cite[Eqn.~2.4]{McDiarmid91}
establishes the first assertion. The second assertion follows from
\cite[Thm.~2.3b]{McDiarmid91}
which he attributes to
\cite{AnVa79} in the binomial case. The last assertion follows from
\cite[Thm.~2.3c]{McDiarmid91} by the change of variables $V_{i}' := 1-V_{i}$,
which he also attributes to
\cite{AnVa79} in the binomial case.

For the second set of assertions, observe that according to the proof of
\cite[Thm.~2.3.2]{serfling} we have
\[\PR^{n}(\hat{\xi}_{q}<\xi)\leq \PR^{n}\Bigl(\sum_{i=1}^{n}W_{i}-\sum_{i=1}^{n}\E W_{i} >
n\d_{2}\Bigr)\]
where
$\d_{2} := q- \F(\xi)$ and $W_{i}:= I(X_{i} \leq \xi)$.
Consequently we obtain $\E W_{i}=\F(\xi), i=1,..,n.$ The assertions then follow
in the same way as in the first set but with the role of $\F$ switched with
$1-\F$.
\end{proofof}

\begin{proofof}{Theorem \ref{thm_serfling2}}

Since the first  assertion is  
 is clearly true when $\xi_{p}-\xi_{1-p}\leq 0$  we can 
 assume $\xi_{p}-
\xi_{1-p} > 0$ and $\frac{1}{2} < p <1$.
First observe that for $c,\a \in \R$ we have
\begin{equation}
\label{i1}
\PR^{n}\Bigl(\hat{\xi}_{p'}-\hat{\xi}_{1-p'}<c(\xi_{p}-\xi_{1-p})\Bigr)
\leq
\PR^{n}\Bigl(\hat{\xi}_{p'}<c\xi_{p}+\a\Bigr)+\PR^{n}\Bigl(\hat{\xi}_{1-p'}>c\xi_{1-p}+\a\Bigr)\,
.
\end{equation}
We will address each term on the right-hand side separately using
Theorem
\ref{thm_serfling}. For $\epsilon > 0$ let $p' > 1-\frac{1}{n}$  so that we
have the identities
$\hat{\xi}_{p'}= \sup_{i=1,..,n}{X_{i}}$ and
$\hat{\xi}_{1-p'}=\sup_{i=1,..,n}{X_{i}}$.
Since $ p' > p$ it follows that $\xi_{p}\leq  \xi_{p'}$  and
consequently 
  $\xi_{p}-\epsilon < \xi_{p'}$. Moreover, a similar argument
shows that $\xi_{1-p}+\epsilon > \xi_{1-p'}$.
Consequently, if
we define
\[c_{\epsilon}\Defi 1-
\frac{2\epsilon}{\xi_{p}-\xi_{1-p}}
\] and
\[ \a_{\epsilon} \Defi
\frac{\xi_{p}(\xi_{1-p}+\epsilon)-\xi_{1-p}(\xi_{p}-\epsilon)}{\xi_{p}-\xi_{1-p}}\]
it follows  that $c_{\epsilon} < 1  $ and
\[ c_{\epsilon}\xi_{p}+\a_{\epsilon}=\xi_{p}-\epsilon<\xi_{p'} \]
\[c_{\epsilon}\xi_{1-p}+\a_{\epsilon} = \xi_{1-p}+\epsilon  > \xi_{1-p'}.\]
 Consequently we can apply Part 2iii of Theorem \ref{thm_serfling} to the
first term on the right-hand side of (\ref{i1}) to obtain
\[\PR^{n}\Bigl(\sup_{i=1,..,n}{X_{i}}<c_{\epsilon}\xi_{p}+\a_{\epsilon}\Bigr)=
\PR^{n}\Bigl(\sup_{i=1,..,n}{X_{i}}<\xi_{p} -\epsilon\Bigr)=\PR^{n}\Bigl(\hat{\xi}_{p'}<\xi_{p} -\epsilon\Bigr)\leq
e^{-\frac{n\d_{2}^{2}}{2(1-\F(\xi_{p} -\epsilon))}}
\]
where $\d_{2}:=  p'-\F(\xi_{p} -\epsilon)$.
Letting $p' \mapsto 1$ we obtain
\[\PR^{n}\Bigl( \sup_{i=1,..,n}{X_{i}}< c_{\epsilon}\xi_{p}+\a_{\epsilon}\Bigr)\leq
e^{-\frac{1}{2}n\bigl(1-\F(\xi_{p} -\epsilon)\bigr) }.
\]
Since $\F(\xi_{p} -\epsilon) \leq \F(\xi_{p}-) \leq p$ we then conclude
that 
\begin{equation}
\label{e1}
\PR^{n}\Bigl( \sup_{i=1,..,n}{X_{i}}<c_{\epsilon}\xi_{p}+\a_{\epsilon}\Bigr)\leq
e^{-\frac{1}{2}n(1-p) }.
\end{equation}

 For the second term on the right of (\ref{i1}) we apply
Part 1iii of Theorem \ref{thm_serfling} to obtain
\[\PR^{n}\Bigl(
\inf_{i=1,..,n}{X_{i}}>
c_{\epsilon}\xi_{1-p}+\a_{\epsilon}\Bigr)=\PR^{n}\Bigl(
\inf_{i=1,..,n}{X_{i}}>\xi_{1-p}+\epsilon\Bigr)=
\PR^{n}\Bigl(\hat{\xi}_{1-p'}>\xi_{1-p}+\epsilon\Bigr)\leq
e^{-\frac{n\d_{1}^{2}}{2\F(\xi_{1-p}+\epsilon)}}
\]
where $\d_{1}:=  \F(\xi_{1-p}+\epsilon)-1+p'$. Letting $p'\mapsto 1$
and using $F(\xi_{1-p}+\epsilon)\geq F(\xi_{1-p})\geq 1-p$ 
 we obtain
\begin{equation}
\label{e2}
\PR^{n}\Bigl( \inf_{i=1,..,n}{X_{i}}<c_{\epsilon}\xi_{p}+\a_{\epsilon}\Bigr)\leq
e^{-\frac{1}{2}n(1-p) }.
\end{equation}
We combine the inequalities (\ref{e1}) and (\ref{e2}) with (\ref{i1}) to
obtain
\[
\PR^{n}\Bigl(D_{n}<c_{\epsilon}(\xi_{p}-\xi_{1-p})\Bigr)
\leq 2e^{-\frac{1}{2}n(1-p) }.
\]
Since $c_{\epsilon} \uparrow 1$ as $\epsilon \downarrow 0$ the first
assertion of the theorem follows (see e.g.~ \cite[Thm.~1.2.7]{Ash72}).

For the second assertion, 
 observe that it is clearly 
 true when $\xi_{p}=0$ so we can assume  
 $\xi_{p}>  0.$  Now observe that the proof of 
Equation (\ref{e1}) actually proved that 
\[ \PR^{n}\Bigl(
\sup_{i=1,..,n}{X_{i}}<\xi_{p} -\epsilon\Bigr)\leq
e^{-\frac{1}{2}n(1-p) }.\]
Since $(\xi_{p} -\epsilon) \uparrow \xi_{p}$ as $\epsilon \downarrow 0$
the second assertion follows.

\end{proofof}

\begin{proofof}{Proposition \ref{d_suff}}
Let $\d_{+} := 1-\F(X_{1}-\frac{\e}{2(1+\e)} { D})$ and
 $\d_{-} := \F(X_{-}+\frac{\e}{2(1+\e)} { D})$ and define
$\d^{*}=\min{(\d_{-},\d_{+})}-\d' $ with $\d' >0.$ Then since  Lemma
\ref{lem_F}asserts that
$\F^{-1}(t) \leq x$ if and only if $t \leq \F(x)$ it follows that
$\F^{-1}(\F(X_{+}-\frac{\e}{2(1+\e)}
{D})+\d') > X_{+}-\frac{\e}{2(1+\e)} { D}$ and therefore
\[\x_{1-\d^{*}}\geq \x_{1-\d_{+}+\d'}
=\F^{-1}(1-\d_{+}+\d')=\F^{-1}(\F(X_{+}-\frac{\e}{2(1+\e)}
{ D})+\d') \geq X_{+}-\frac{\e}{2(1+\e)} { D}.\]
Moreover, since
\[\x_{\d^{*}}\leq  \xi(\d_{-})
= \F^{-1}(\d_{-})=\F^{-1}(\F(X_{-}+\frac{\e}{2(1+\e)} { D}))
\leq X_{-}+\frac{\e}{2(1+\e)} { D}\] we conclude that
\[\x_{1-\d^{*}}-\xi_{\d^{*}} \geq\bigl(1-\frac{2\e}{2(1+\e)}\bigr) {
D}=\frac{{ D}}{1+\e}.
\]
The assertion then follows by letting $\d' \mapsto 0$.
\end{proofof}

\begin{proofof}{Corollary \ref{cor_diam2}}
We will only prove the first assertion since the proof of the second is
essentially the same.
Since the assertion is trivially true when
$ D=0$ we can assume this not the case.
Then since for $  0 <p \leq \d(\epsilon)$ we have
\[\xi_{1-p}-\xi_{p}= \frac{\xi_{1-p}-\xi_{p}}{{ D}} D\geq \frac{{ D}}{1+\epsilon}\]
the assertion  follows from
from Theorem \ref{thm_serfling2}.
\end{proofof}


\begin{proofof}{Theorem \ref{thm_test}}
Since $\hat{D} \leq D$ with probability $1$ we find that
\begin{eqnarray*}
\theta_{1}&=&\PR^{n}\bigl(T_{1}
=0\big|{\cal H}_{1}\bigr)\\
&=&\PR^{n}\bigl(f_{H}((1+\epsilon)\hat{D})+f_{K}((1+\epsilon)\hat{D}) >
0\big|f_{H}((1+\epsilon)D)+f_{K}((1+\epsilon)D)\leq 0\bigr)\\
&\leq & \PR^{n}\bigl(f_{H}((1+\epsilon)D)+f_{K}((1+\epsilon)D) >
0\big|f_{H}((1+\epsilon)D)+f_{K}((1+\epsilon)D)\leq 0\bigr)\\
& =&0
\end{eqnarray*}
establishing the first assertion.
 For the second and third assertions we use following lemma.
\begin{lemma}
\label{lem_sample}

With the assumptions of Theorem \ref{thm_test}
 let  $ F':{\cal X}^{n} \rightarrow \R$ and
let  $f:\R^{k} \rightarrow \R$ be non-decreasing.
Then we have
\begin{eqnarray}
\label{unionbound}
 \PR^{n}\Bigl(F' \geq  f\bigl((1+\epsilon)\hat{ D}
\bigr)
 \Bigr)
& \leq &
\PR^{n}\Bigl(  F' \geq f(D )
\Bigr)+ \sum_{j=1}^{k}\PR^{n_{j}}\Bigl( (1+\epsilon)\hat{ D}_{F^{j}} < {
D}_{F^{j}}\Bigr)\, .
\end{eqnarray}
Moreover,
 for  $  n_{j} \geq  n_{j}^{\epsilon}(\d), j=1,..,k $  we have
\[  \sum_{j=1}^{k}\PR^{n}\Bigl( (1+\epsilon)\hat{ D}_{F^{j}} < {
D}_{F^{j}}\Bigr) \leq \d\]
and therefore
\begin{equation*}
 \PR^{n}\Bigl(F'  \geq f\bigl((1+\epsilon)\hat{ D}
 \bigr)\Bigr) \leq \PR^{n}\Bigl(F'  \geq f\bigl( D
 \bigr)\Bigr) + \d  \,
\end{equation*}

\end{lemma}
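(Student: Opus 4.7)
The plan is to prove the first inequality by a union bound that splits the event $\{F' \geq f((1+\epsilon)\hat{D})\}$ according to whether all the empirical diameters provide good lower bounds for the essential diameters. Specifically, I would decompose
\[
\{F' \geq f((1+\epsilon)\hat{D})\} \subseteq \{F' \geq f((1+\epsilon)\hat{D}),\, (1+\epsilon)\hat{D}_{F^j} \geq D_{F^j}\text{ for all }j\} \cup \bigcup_{j=1}^{k}\{(1+\epsilon)\hat{D}_{F^j} < D_{F^j}\}.
\]
On the first event, componentwise monotonicity of $f$ gives $f((1+\epsilon)\hat{D}) \geq f(D)$ and hence $F' \geq f(D)$. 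The first assertion then follows by a simple union bound, where the probability of the bad event on the right is split across the $k$ coordinates.

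Next, I would apply Corollary \ref{cor_diam2} to each term $\PR^{n_j}\bigl((1+\epsilon)\hat{D}_{F^j} < D_{F^j}\bigr) = \PR^{n_j}\bigl(\hat{D}_{F^j} < D_{F^j}/(1+\epsilon)\bigr)$, which yields the upper bound $2 e^{-n_j \tau^j(\epsilon)/2}$. Plugging in $n_j \geq n_j^{\epsilon}(\delta) = \frac{2\log(2k/\delta)}{\tau^j(\epsilon)}$ shows each term is at most $\delta/k$, and summing across $j=1,\dots,k$ produces the bound $\delta$. Combining with the first part of the lemma yields the final inequality.

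The argument is essentially a careful union bound combined with the monotonicity of $f$, so I do not anticipate a serious obstacle. The only subtle point is making sure that the componentwise comparison $(1+\epsilon)\hat{D} \geq D$ together with the assumed non-decreasing property of $f:\R^{k}\to \R$ really does imply $f((1+\epsilon)\hat{D}) \geq f(D)$; I would just note explicitly that \emph{non-decreasing} here means coordinatewise monotone, which is the natural interpretation given how $f_H, f_K$ are used in Theorem \ref{thm_test}. The choice of $n_j^{\epsilon}(\delta)$ is then engineered precisely so that each of the $k$ tail estimates contributes $\delta/k$, giving the clean conclusion after the union bound.
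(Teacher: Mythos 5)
Your proof is correct and follows essentially the same route as the paper's: the paper also decomposes $\{F' \geq f((1+\epsilon)\hat D)\}$ into $\{F' \geq f(D)\} \cup \bigcup_{j=1}^k\{(1+\epsilon)\hat D_{F^j} < D_{F^j}\}$ via coordinatewise monotonicity of $f$, applies Corollary \ref{cor_diam2} to bound each bad event by $2e^{-n\tau^j(\epsilon)/2}$, and uses the choice of $n_j^\epsilon(\delta)$ to make each contribution $\delta/k$. Your explicit remark that ``non-decreasing'' must mean coordinatewise monotone is the right reading and is the one the paper implicitly uses.
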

\begin{proof}
By the monotonicity of $f$ it follows that
\begin{eqnarray}
\label{union}
&& \Bigl\{F- f\bigl((1+\epsilon)  \hat{D}
\bigr) \geq 0\Bigr\}\\
 &  \subset &
 \Bigl\{ F-f( D) \geq
0\Bigr\} \cup \cup_{j=1}^{k}
 \Bigl\{(1+\epsilon)\hat{ D}_{F^{j}} <  D_{F^{j}} \Bigr\}\, .
\end{eqnarray}
Consequently, we obtain the first assertion:
\begin{eqnarray*}
\PR^{n}\Bigl(F' \geq  f\bigl((1+\epsilon)\hat{ D}
\bigr)
 \Bigr)
 &\leq&
\PR^{n}\Bigl(  F' \geq f(D )
\Bigr)+ \sum_{j=1}^{k}\PR^{n}\Bigl( (1+\epsilon)\hat{ D}_{F^{j}} < {
D}_{F^{j}}\Bigr)\\
& = &\PR^{n}\Bigl(  F' \geq f(D )
\Bigr)+  \sum_{j=1}^{k}\PR^{n}\Bigl( (1+\epsilon)\hat{ D}_{F^{j}} < {
D}_{F^{j}}\Bigr)\, .
\end{eqnarray*}
For the second, observe that by Corollary
\ref{cor_diam2} we find that for each $j$ we have
\[\PR^{n}\Bigl( (1+\epsilon)\hat{
D}_{F^{j}} < {
D}_{F^{j}}\Bigr) =\PR^{n}\Bigl( (1+\epsilon)\hat{
D}_{F^{j}} < {
D}_{F^{j}}\Bigr)  \leq 2e^{-\frac{n\t^{j}(\epsilon)}{2}}.\] Since
the
assumption $n\geq  n_{j}^{\epsilon}(\d)$, defined in
(\ref{n_def}),
 implies that
\[2e^{-\frac{n\t^{j}(\epsilon)}{2}} \leq \frac{ \d}{k}, \quad
j=1,..,k \] the
second assertion follows from the first.
\end{proof}

We now proceed to the second and third assertions of Theorem \ref{thm_test}.
Observe that
\begin{eqnarray*}
\theta_{11}&=&\PR^{n}\bigl(T_{1}
=1,T_{2}=0\big|{\cal H}_{2\epsilon}\bigr)\\
&\leq&\PR^{n}\bigl(
T_{2}=0\big|{\cal H}_{2}\bigr)\\
&=&\PR^{n}\bigl(
F' \leq -f_{H}((1+\epsilon)\hat{D})\big|{\cal H}_{2}\bigr)
\end{eqnarray*}
Since $ \Delta = \sum_{j=1}^{k}\PR^{n}\Bigl( (1+\epsilon)\hat{
D}_{F^{j}} < D_{F^{j}}\Bigr)$,
Lemma \ref{lem_sample} (Equation \ref{unionbound}) applied to $-F'$ then shows that
\[\theta_{11} \leq \PR^{n}\bigl(
F' \leq -f_{H}((1+\epsilon)\hat{D})\big|{\cal H}_{2}\bigr) \leq  \PR^{n}\bigl(
F' \leq -f_{H}(D)\big|{\cal H}_{2}\bigr) +\Delta\] thus establishing the second
assertion.
Since $T_{1}=1 $ implies that $f_{H}((1+\epsilon)\hat{D})+f_{K}((1+\epsilon)\hat{D})
\leq 0$ we find that
\begin{eqnarray*}
\theta_{12}&=&\PR^{n}\bigl(T_{1}
=1,T_{2}=1\big|{\cal K}_{2\epsilon}\bigr)\\
&=&\PR^{n}\bigl(T_{1}
=1, F'  > -f_{H}((1+\epsilon)\hat{D})\big|{\cal K}_{2\epsilon}\bigr)\\
&\leq&
\PR^{n}\bigl(T_{1}
=1, F'  > f_{K}((1+\epsilon)\hat{D})\big|{\cal K}_{2\epsilon}\bigr)\\
&\leq&\PR^{n}\bigl(
F' \geq f_{K}((1+\epsilon)\hat{D})\big|{\cal K}_{2}\bigr)
\end{eqnarray*}
As in the previous case, Lemma \ref{lem_sample} then shows that
\[\theta_{12} \leq \PR^{n}\bigl(
F' \geq f_{K}((1+\epsilon)\hat{D})\big|{\cal K}_{2}\bigr) \leq  \PR^{n}\bigl(
F' \geq f_{K}(D)\big|{\cal K}_{2}\bigr) +\Delta\] thus establishing the third
assertion.

The last set of assertions follows by  observing that the assumption $  n \geq
\max{\bigl( n_{j}^{\epsilon}(\d_{1}),
n_{j}^{\epsilon}(\d_{2})\bigr)}, j=1,..,k $ and Lemma \ref{lem_sample} implies that
$\Delta \leq \min{(\d_{1},\d_{2})}$.

\end{proofof}

\begin{proofof}{Corollary \ref{cor_val_e}}
Since $\E F' =\E F$, Lemma \ref{lem_inversion} implies that
\[\PR^{n}\Bigl(F'  \leq - f'_{H}({\cal D}_{F}, {\cal D}_{F'},\d_{1})
\big|{\cal H}_{2} \Bigr) \leq \d_{1}\, , \]
\[\PR^{n}\Bigl(F'  \geq f'_{K}({\cal D}_{F}, {\cal D}_{F'},\d_{2})\big|
{\cal K}_{2}\Bigr) \leq \d_{2}\, , \]
where
\[ f'_{H}(r_{1},r_{2},\d)\Defi
\frac{r_{1}}{\sqrt{2}}\sqrt{\log{\d^{-1}}}+
\frac{r_{2}}{\sqrt{2}}\sqrt{\log{p^{-1}}}-a\, ,\]
\[ f'_{K}(r_{1},r_{2},\d)\Defi
\frac{r_{1}}{\sqrt{2}}\sqrt{\log{\d^{-1}}}+
\frac{r_{2}}{\sqrt{2}}\sqrt{\log{(1-p)^{-1}}}+a'\,.\]
By Definition \ref{cf} we have
${\cal D}_{F}\leq cD$. Moreover,
  the proof of Corollary \ref{cor_val}
  shows that
\[{\cal D}_{F'}\leq \frac{1}{\sqrt{n}}{\cal D}_{F} \leq
\frac{cD}{\sqrt{n}}. \]
Consequently, we have 
$f'_{H}({\cal D}_{F}, {\cal D}_{F'},\d_{1}) \leq f_{H}(D)$ and
$f'_{K}({\cal D}_{F}, {\cal D}_{F'},\d_{2}) \leq f_{K}(D)$   and
therefore
\[\PR^{n}\Bigl(F'  \leq - f_{H}(D)
\big|{\cal H}_{2} \Bigr) \leq \d_{1}\, , \]
\[\PR^{n}\Bigl(F'  \geq f_{K}(D)\big|
{\cal K}_{2}\Bigr) \leq \d_{2}\, , \]
The assertion then follows from Theorem \ref{thm_test}.

\end{proofof}

\begin{proofof}{Corollary \ref{cor_cert_e}}
As in the proof of Corollary \ref{cor_val_e}, since $\E F' =\E F$, Lemma
\ref{lem_inversion} implies that
\[\PR^{n}\Bigl(F'  \leq - f'_{H}({\cal D}_{F}, {\cal D}_{F'},\d_{1})
\big|{\cal H}_{2} \Bigr) \leq \d_{1}\, , \]
\[\PR^{n}\Bigl(F'  \geq f'_{K}({\cal D}_{F}, {\cal D}_{F'},\d_{2})\big|
{\cal K}_{2}\Bigr) \leq \d_{2}\, , \]
where
\[ f'_{H}(r_{1},r_{2},\d)\Defi
\frac{r_{1}}{\sqrt{2}}\sqrt{\log{p^{-1}}}+
\frac{r_{2}}{\sqrt{2}}\sqrt{\log{\d^{-1}}}-a\, ,\]
\[ f'_{K}(r_{1},r_{2},\d)\Defi
\frac{r_{1}}{\sqrt{2}}\sqrt{\log{(1-p)^{-1}}}+
\frac{r_{2}}{\sqrt{2}}\sqrt{\log{\d^{-1}}}+a'\,.\]
By Definition \ref{cf} we have
${\cal D}_{F_{1}}\leq c_{1}D_{1}$ and ${\cal D}_{F_{2}}\leq
c_{2}D_{2}$. Therefore it follows that
\[{\cal D}_{F}={\cal D}_{F_{1}+F_{2}} \leq {\cal
D}_{F_{1}}+{\cal D}_{F_{2}} \leq c_{1}D_{1}+c_{2}D_{2}.\]
 Moreover, the proof of
  Corollary
 \ref{cor_cert} implies that
\[{\cal D}_{F'}\leq \sqrt{\frac{{\cal D}^{2}_{F_{1}}}{n_{1}}
+\frac{{\cal
D}^{2}_{F_{2}}}{n_{2}}}  \leq \sqrt{\frac{c^{2}_{1}D^{2}_{1}}{n_{1}}
+\frac{c_{2}^{2}
D^{2}_{2}}{n_{2}} }
. \]
Consequently, we have
$f'_{H}({\cal D}_{F}, {\cal D}_{F'},\d_{1}) \leq f_{H}(D)$ and
$f'_{K}({\cal D}_{F}, {\cal D}_{F'},\d_{2}) \leq f_{K}(D)$   and
therefore
\[\PR^{n}\Bigl(F'  \leq - f_{H}(D)
\big|{\cal H}_{2} \Bigr) \leq \d_{1}\, , \]
\[\PR^{n}\Bigl(F'  \geq f_{K}(D)\big|
{\cal K}_{2}\Bigr) \leq \d_{2}\, , \]
The assertion then follows from Theorem \ref{thm_test}.

\end{proofof}

\section{Appendix}

The following Lemma from \cite[Lem.~1.1.4 \& Sec.~2.3]{serfling}
 lists important properties
of the distribution function $\F(x) := \PR(X \leq x)$ and its corresponding
 quantile function
$\F^{-1}(t) := \inf{\{x: \F(x) \geq t\}}$.
\begin{lemma}
\label{lem_F}
Let  $\F$ be a distribution function. Then $\F$ is  right continuous and the
function $\F^{-1}, 0 < t< 1$ is non-decreasing, left continuous and satisfies
\begin{enumerate}
\item $\F^{-1}(\F(x)) \leq x,  -\infty < x < \infty \,.$
\item $\F(\F^{-1}(t)) \geq t \geq \F(\F^{-1}(t)-),  0 < t < 1\, .$
\item $\F(x) \geq t$ if and only if $x \geq \F^{-1}(t).$
\end{enumerate}
\end{lemma}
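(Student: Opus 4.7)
The plan is to establish the properties in an order that lets each one feed the next, since they are interrelated. I would first derive right continuity of $\F$ directly from continuity from above of the probability measure: if $x_n\downarrow x$, then $\{X\leq x_n\}\downarrow\{X\leq x\}$, so $\F(x_n)\downarrow\F(x)$. Monotonicity of $\F^{-1}$ is immediate from the definition, since enlarging $t$ shrinks the set $\{x:\F(x)\geq t\}$ and hence raises its infimum.

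Next I would prove property (ii), namely $\F(\F^{-1}(t))\geq t\geq \F(\F^{-1}(t)-)$. For the first inequality, pick a sequence $x_k\downarrow \F^{-1}(t)$ from the defining set $\{x:\F(x)\geq t\}$; by the just-established right continuity of $\F$, $\F(x_k)\to\F(\F^{-1}(t))$, and each $\F(x_k)\geq t$, so the limit is $\geq t$. For the second inequality, any $y<\F^{-1}(t)$ must satisfy $\F(y)<t$ (otherwise $y$ would lie in the defining set and undercut the infimum); letting $y\uparrow\F^{-1}(t)$ gives $\F(\F^{-1}(t)-)\leq t$. Property (i) follows trivially: $\F(x)\geq\F(x)$ puts $x$ in the defining set for $t=\F(x)$, so the infimum is at most $x$.

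Property (iii) comes next as a direct combination of what is already proved. If $\F(x)\geq t$, then $x$ is in the defining set, so $x\geq\F^{-1}(t)$. Conversely, if $x\geq\F^{-1}(t)$, then by monotonicity of $\F$ and property (ii),
\[
\F(x)\;\geq\;\F(\F^{-1}(t))\;\geq\;t.
\]

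Finally, left continuity of $\F^{-1}$ is the step that needs the most care and is the main technical point. Let $t_n\uparrow t$ in $(0,1)$. Monotonicity already gives $\F^{-1}(t_n)\uparrow s$ with $s\leq\F^{-1}(t)$. Suppose for contradiction that $s<\F^{-1}(t)$, and pick $z$ with $s<z<\F^{-1}(t)$. Since $z<\F^{-1}(t)$, property (iii) yields $\F(z)<t$. On the other hand, $z>s\geq\F^{-1}(t_n)$ for every $n$, so property (iii) gives $\F(z)\geq t_n$ for every $n$; letting $n\to\infty$ produces $\F(z)\geq t$, a contradiction. Hence $s=\F^{-1}(t)$, which is the left continuity claim. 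The only subtlety in the whole proof is to resist the temptation to use attainment of the infimum in the definition of $\F^{-1}$; instead one must route everything through right continuity of $\F$ and the characterization in (iii).
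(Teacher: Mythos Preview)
Your proof is correct and self-contained. The paper, however, does not supply its own argument for this lemma: it simply quotes the statement and attributes it to Serfling (\emph{Approximation Theorems of Mathematical Statistics}, Lemma~1.1.4 and Section~2.3), so there is no in-paper proof to compare against. Your derivation follows the standard route one finds in Serfling and similar references: right continuity of $\F$ from continuity from above of the measure, then property~(ii) via an approximating sequence and right continuity, property~(i) immediately from the definition, property~(iii) from~(ii) and monotonicity, and finally left continuity of $\F^{-1}$ by a contradiction argument built on~(iii). Each step is valid as written, and your closing remark about not assuming attainment of the infimum is exactly the point that trips people up.
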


\begin{example}[Extreme values of $c_{F}$]
\label{ex_c}
Let $F(x) := \sum_{j=1}^{m}{F_{j}(x_{j})}$. Then
since
$F(x)-F(x')=\sum_{j=1}^{m}{\bigl(F_{j}(x_{j})-F_{j}(x'_{j})\bigr)}$
it follows that \[ D_{F}=
\sum_{j=1}^{m}{D_{j}^{F_{j}}}.\]
Moreover, since
$D^{F}_{j}=D^{F_{j}}_{j}, j=1,..,m$ we obtain
\[
{\cal D}^{2}_{F}=\sum_{j=1}^{m}{\bigl(D^{F_{j}}_{j}\big)^{2}}\]
and therefore
\[ c^{2}_{F} = \frac{ \sum_{j=1}^{m}{\bigl(D^{F_{j}}_{j}\bigr)^{2}}
}{\Bigl(\sum_{j=1}^{m}{D^{F_{j}}_{j}}\Bigr)^{2}}.\]
In particular, when $D^{F_{j}}_{j}=D^{F_{1}}_{1}, j=1,,m$ we obtain
$  c_{F}=\frac{1}{\sqrt{m}}.$
On the other hand, let $ {\cal X}:= [0,1]^{m} \subset \R^{m}$ 
and  let $F(x) := \snorm{x}, \snorm{x}\leq 1 $ and
$F(x) := 0, \snorm{x}> 1 $ where $\snorm{x}$ is the Euclidean
norm of $x$.
 Then it is easy to
see
that $ D_{F}=1$, $D_{j}^{F}=1, j=1,..,m$ and therefore ${\cal D}_{F}^{2}=
m$.
Consequently  in this case we obtain
$ c_{F} = \sqrt{m}.$ 
\end{example}

\section*{Acknowledgments}
We gratefully acknowledge several illuminating discussions with Nicholas
Hengartner regarding this work.

\bibliographystyle{unsrt}

\bibliography{np}

\end{document}